\pgfplotsset{every tick label/.append style={font=\small}}
\pgfplotsset{compat=1.14}
\newcolumntype{K}[1]{>{\centering\arraybackslash$}p{#1}<{$}}
\newcolumntype{R}{>{\raggedleft\arraybackslash}X}
\newcolumntype{L}{>{\raggedright\arraybackslash}X}
\newcolumntype{C}{>{\centering\arraybackslash}X}
\newcolumntype{M}[1]{>{\centering\arraybackslash}m{#1}}
\newcolumntype{A}{>{\columncolor{gray!25}}C}
\newcolumntype{a}{>{\columncolor{gray!25}}c}
\newlength{\tablen}
\newcolumntype{.}{D{.}{.}{-1}}
\g@addto@macro{\UrlBreaks}{\UrlOrds}
\renewcommand\p@subfigure{\arabic{figure}.}
\renewcommand\p@subtable{\arabic{table}.}
\newenvironment{customlegend}[1][]{%
	\begingroup
	\csname pgfplots@init@cleared@structures\endcsname
	\pgfplotsset{#1}%
    }{%
	\csname pgfplots@createlegend\endcsname
	\endgroup
    }%
\def\addlegendimage{\csname pgfplots@addlegendimage\endcsname}
\setlist[itemize]{leftmargin=2.5\parindent}
\setlist[enumerate]{leftmargin=2.5\parindent}
\theoremstyle{plain}
\newtheorem{proposition}{Proposition}[section]
\theoremstyle{definition}
\newtheorem{axiom}{Axiom}
\newtheorem{definition}{Definition}[section]
\newtheorem{example}{Example}[section]
\theoremstyle{remark}
\newtheorem{remark}{Remark}[section]
\def\keywords{\vspace{.5em} 
{\noindent \textit{Keywords}: }}
\def\JEL{\vspace{.5em} 
{\noindent \textbf{\emph{JEL} classification number}: }}
\def\AMS{\vspace{.5em} 
{\noindent \textbf{\emph{MSC} class}: }}
\author{\href{https://sites.google.com/site/laszlocsato87}{L\'aszl\'o Csat\'o}\thanks{~E-mail: \emph{csato.laszlo@sztaki.hu} \newline Laboratory on Engineering and Management Intelligence, Research Group of Operations Research and Decision Systems, Institute for Computer Science and Control (SZTAKI), Budapest, Hungary \newline Department of Operations Research and Actuarial Sciences, Corvinus University of Budapest (BCE), Budapest, Hungary}
$\qquad$ $\qquad$ Csaba T\'oth\thanks{~Budapest University of Technology and Economics, Budapest, Hungary \newline Institute of Economics, Research Centre for Economic and Regional Studies, Budapest, Hungary}}
\title{University rankings from the revealed \\ preferences of the applicants}
\date{\today}
\def\Dedication{
{\noindent ``The ranking never lies.''}\footnote{~Source: \url{https://www.eurosport.com/tennis/stan-wawrinka-doubts-his-world-no.1-credentials-but-is-he-right_sto5945690/story.shtml}.
Downloaded 12 September 2018.}

\flushright
\noindent (Stan Wawrinka, who won three Grand Slam tournaments in tennis)

\vspace{1cm} 
\justify }
\begin{document}

\maketitle

\Dedication

\begin{abstract}
\noindent
A methodology is presented to rank universities on the basis of the lists of programmes the students applied for. We exploit a crucial feature of the centralised assignment system to higher education in Hungary: a student is admitted to the first programme where the score limit is achieved. This makes it possible to derive a partial preference order of each applicant. Our approach integrates the information from all students participating in the system, is free of multicollinearity among the indicators, and contains few ad hoc parameters.
The procedure is implemented to rank faculties in the Hungarian higher education between 2001 and 2016. We demonstrate that the ranking given by the least squares method has favourable theoretical properties, is robust with respect to the aggregation of preferences, and performs well in practice.
The suggested ranking is worth considering as a reasonable alternative to the standard composite indices.

\keywords{Decision analysis; university ranking; group decision; preference ordering; incomplete pairwise comparison}

\AMS{62F07, 91B10, 97B40}

\JEL{C44, I23, Z18}
\end{abstract}

\clearpage
\tableofcontents
\vspace{2cm}
\listoftables
\vspace{1cm}
\listoffigures

\clearpage

\section{Introduction} \label{Sec1}

The global expansion of higher education has created an increasing demand for the comparison of universities and has inspired the development of ranking systems or league tables around the world \citep{DillSoo2005, UsherSavino2007}. These rankings are usually based on the composition of various factors, namely, they are indices with a number of moving parts, which the producer -- usually an academic institution, a government, a magazine, a newspaper, or a website -- is essentially free to set.
This approach is widely criticised for its weak theoretical link to quality and serious methodological drawbacks \citep{Brooks2005}, as well as for its sensitivity \citep{LiuCheng2005, Saisanad'HombresSaltelli2011, DeWitteHudrlikova2013, Safon2013, Moed2017, Soh2017, Johnes2018}. For example, \citet{OlcayBulu2017} find that there are significant differences even among indices focusing on the same aspect such as teaching or research output.
Therefore, similar measures have been called ``mashup indices'' in development economics \citep{Ravallion2012a}.

Nevertheless, college ranking remains a transparent tool of fair evaluation for the public, which may have a huge impact on higher education institution decision making \citep{Hazelkorn2007, Marginson2014}, so there is a clear need for more robust rankings. It seems that there exist two separate directions to obtain them.

The first line of research strives to more closely integrate university rankings with multi-criteria decision making.
According to \citet{BillautBouyssouVincke2009}, the criteria used in the Shanghai ranking are irrelevant, the whole procedure pays insufficient attention to fundamental structuring issues, while the absence of basic knowledge of aggregation techniques and their properties vitiate the evaluation of the institutions.
\citet{GiannoulisIshizaka2010} use a three-tier web-system to get a customised ranking of British Universities with ELECTRE III multi-criteria decision method that reflects personal preferences. This approach is, for instance, able to reveal any disastrous criterion with the veto threshold or distinguish between indifference and incomparability in the alternatives.
According to \citet{Kaycheng2015}, the indicators of Times Higher Education World University Ranking 2013-2014 tend to correlate with one another, causing a multicollinearity problem, furthermore, some indicators may contribute little to the overall score. The obvious solution is the identification and exclusion of redundant indicators, which are non-contributing or even misinforming.
\citet{KunschIshizaka2018} apply a model where evaluations are given by performance profile distributions to assess the quality of research in the United Kingdom since a single numerical score can hardly represent a complex criterion.
\citet{CorrenteGrecoSlowinski2019} increase the robustness of the ranking by using a multiple criteria hierarchy process and the Choquet integral preference model, which generalizes the weighted sum to take into account the possible negative and positive interactions between the criteria.

The second approach aims to handle the above problems by other methodologies, for example, through pairwise comparisons. They do not require directly an outright ranking of all universities studied, which might not be done easily, but only choices between two universities. Thus universities play virtual matches against each other, and an institution defeats another if it is preferred by a student to the other. Each applicant provides a tournament, aggregated into a common preference matrix to derive a ranking.
\citet{DittrichHatzingerKatzenbeisser1998} analyse a survey among the first-year students at the Vienna University (Austria) to show that preferences are substantially different for different groups of students.
\citet{Averyetal2013} investigate US undergraduate programs using a national sample of high-achieving students to obtain a ``desirability'' ranking of colleges, which implicitly weights all features by the degree to which the students collectively care about them.
\citet{KoczyStrobel2010} suggest a similar method for journal ranking.
Such a ranking is essentially parameter-free, independent of an arbitrary choice of factors and component weights, and may reflect all characteristics of a university that are observed by the applicants even if they are non-measurable.

In this paper, a methodology is presented to derive a higher education ranking from the lists of programmes the students applied for.
We exploit administrative data in Hungary, a country which has a centralised system of admissions to higher education designed such that it is relatively straightforward to recover the preferences. The dataset has been used recently to analyse college choices \citep{Telcsetal2015}, to identify obvious mistakes of the applicants in a strategically simple environment \citep{ShorrerSovago2019}, or to rank the universities \citep{TelcsKosztyanTorok2016}.
There exists even a whole research project devoted to the latter topic (see \url{http://ranking.elte.hu/}), led by Gy\"orgy F\'abri, who has launched the first Hungarian university ranking in 2001.

Similarly to \citet{DittrichHatzingerKatzenbeisser1998} and \citet{Averyetal2013}, our procedure uses existing methods of paired comparisons rankings.
The main contribution resides in the analysis of the dataset. While the previous literature has used survey data to obtain the preferences, our administrative dataset covers every Hungarian student applying for higher education in their home country across 16 years (2001-2016). In addition, it can be argued that the applicants have taken substantial efforts to be well-informed before such a high-stakes decision. 
On the other hand, the surveys designed by \citet{DittrichHatzingerKatzenbeisser1998} and \citet{Averyetal2013} explicitly asked the students to choose between two universities, while we should devise what a given list of admissions reveal about the preferences of the applicant.
Finally, opposed to these papers, an entire subsection is devoted to the axiomatic comparison of three paired comparisons-based ranking methods to help their understanding and the choice among them.

It seems that the suggested methodology can be applied in several other fields where applicants should reveal some preferences and the centralised allocation mechanism provides truthfulness such as the student-optimal deferred acceptance algorithm \citep{GaleShapley1962, DubinsFreedman1981}. Besides Hungary (see the detailed discussion in Section~\ref{Sec4}), college admission is organised basically along these principles in Chile \citep{RiosLarroucauParraCominetti2014}, Ireland \citep{Chen2012}, and Spain \citep{Romero-Medina1998}.
School choice often shows similar characteristics, too, like in New York, Hungary, Finland, Amsterdam \citep{Biro2017}. Allocation of graduates according to their preferences in a centrally coordinated way is typical in certain professions, including residents (junior doctors) in the UK and the US, teachers in France, or lawyers in Germany \citep{Biro2017}. \citet{Biro2017} gives a comprehensive review of matching models under preferences.

Naturally, the question arises what is the advantage of such a ranking based on the revealed preferences of all applicants. Indeed, probably each student has somewhat different preferences, so the importance of criteria to be taken into account should be customised. However, we think the abundance of world university rankings clearly illustrates that there is a strong demand for unique rankings by the decision-makers and media, even if most experts are convinced that constructing multiple rankings would be more professional. In this case, the suggested ranking is worth considering as a reasonable alternative to the standard composite indices.

The paper proceeds as follows.
Section~\ref{Sec2} describes the data. The methods used to derive a ranking from the preferences are detailed in Section~\ref{Sec3}.
Our main theoretical innovation, that is, recovering the preferences from the data is presented in Section~\ref{Sec4}.
Section~\ref{Sec5} presents the results, while Section~\ref{Sec6} offers some concluding thoughts.

\section{Data} \label{Sec2}

In Hungary, the admission procedure of higher education institutions has been organised by a centralised matching scheme at a national level since 1996 \citep{Biro2008, Biro2011}.
At the beginning of the procedure, the students submit their ranking lists over the fields of studies of particular faculties they are applying for. A single application in a given year consists of the following data:
\begin{itemize}
\item
Student ID;
\item
Position according to the student's preference order;
\item
Faculty/School of a higher education institution;
\item
Course;
\item
Level of study (BSc/BA, MSc/MA, Single long cycle);
\item
Form of study (full-time training, correspondence training, evening training);
\item
Financing of the tuition (state-financed: completely financed by the state, student-financed: partly financed by the student).
\end{itemize}
For example, the record
\[
\left[
\begin{array}{ccccccc}  
158 & 2 & \text{PTE--AOK} & \text{Medicine} & O & N & A \\
\end{array}
\right]
\]
means that the student with the ID \#158 applied at the second place ($2$) for the faculty PTE--AOK, course Medicine in level $O$ (Single long cycle giving an MSc degree), form $N$ (full time), financed by the state ($A$).
In the following, the faculty, course, level of study, form of study, and financing of the tuition (e.g. PTE--AOK, Medicine, $O$, $N$, $A$) will be called a \emph{programme}.

After that, the students receive scores at each of the programmes they applied for, based on their final grades at secondary school and entrance exams. The scores of an applicant can differ in two programmes, for example, when the programmes consider the grades from different sets of subjects.

Finally, a government organisation collects all ranking lists and scores, and a centralised algorithm determines the score limits of the programmes. A student is admitted to the first programme on her list where the score limit is achieved, meaning that it is not allowed to overwrite this matching by any participant.
The score limits are good indicators of the quality and popularity of the programmes, they highly correlate with the applicants' preferences and with the job market perspectives of the graduates \citep{Biro2011}.

Our dataset contains almost all applications between 2001 and 2014: for applicants who have a list with more than six items, only the first six programmes plus the one where (s)he is admitted (if there exists such a programme) were recorded. The 2015 and 2016 data do not have this limitation.

\section{Methodology} \label{Sec3}

First, the theoretical background of ranking from pairwise comparisons will be presented.

\subsection{Ranking problems and scoring methods} \label{Sec31}

Let $N = \{ X_1,X_2, \dots ,X_n \}$ be the \emph{set of objects} on which the preferences of the agents are being expressed, and $A = \left[ a_{ij} \right] \in \mathbb{R}^{n \times n}$ be the \emph{preference matrix} such that $a_{ij} \in \mathbb{R}$ is a measure of how object $X_i$ is preferred over $X_j$. $a_{ii} = 0$ is assumed for all $X_i \in N$.

The pair $(N,A)$ is called a \emph{ranking problem}. The set of ranking problems with $n$ objects ($|N| = n$) is denoted by $\mathcal{R}^n$.

There exists a one-to-one correspondence between preference matrices and directed weighted graphs without loops: if object $X_i$ is preferred to object $X_j$ with an intensity of $a_{ij}$, then graph $G$ contains a directed edge from node $X_i$ to node $X_j$ which has a weight of $a_{ij}$, and vice versa.

The aim is to derive a \emph{ranking} of the objects from any ranking problem $(N,A)$, for which purpose scoring methods will be used.
A scoring method $f:\mathcal{R}^n \to \mathbb{R}^n$ is a function that associates a score $f_i(N,A)$ for each object $X_i$ in any ranking problem $(N,A) \in \mathcal{R}^n$. It immediately induces a ranking of the objects in $N$ (a transitive and complete weak order on the set of $N \times N$) by $f_i(N,A) \geq f_j(N,A) \Rightarrow X_i \succeq X_j$, that is, object $X_i$ is at least as good as $X_j$ if its score is not smaller.

A ranking problem $(N,A)$ has the skew-symmetric \emph{results matrix} $R = A - A^\top \in \mathbb{R}^{n \times n}$, and the symmetric \emph{matches matrix} $M = A + A^\top \in \mathbb{R}^{n \times n}$, where $m_{ij}$ is the number of comparisons between $X_i$ and $X_j$ whose outcome is given
by $r_{ij}$.

Let $\mathbf{e} \in \mathbb{R}^n$ denote the column vector with $e_i = 1$ for all $i = 1,2, \dots ,n$.

Perhaps the most straightforward measure for the goodness of the objects is the sum of their ``net'' preferences.

\begin{definition} \label{Def31}
\emph{Row sum}:
Let $(N,A) \in \mathcal{R}^n$ be a ranking problem.
The \emph{row sum} score $s_i(A)$ of object $X_i \in N$ is given by $\mathbf{s}(A) = R \mathbf{e}$, that is, $s_i(A) = \sum_{j=1}^n r_{ij}$ for all $X_i \in N$.
\end{definition}

It is clear that the row sum score does not take into account the ``popularity'' of the objects which can be a problem as the volatility of row sum for objects with a high number of comparisons is usually significantly higher than the volatility of row sum for objects with a low number of comparisons.

This effect can be handled by normalisation. Denote the degree of node $X_i$ in graph $G$ by $d_i = \sum_{X_j \in N} m_{ij}$. Introduce the diagonal matrix $D^{-} = \left[ d^{-}_{ij} \right] \in \mathbb{R}^{n \times n}$ such that $d^{-}_{ii} = 1 / d_{i}$ for all $i = 1,2, \dots ,n$ and $d^{-}_{ij} = 0$ if $i \neq j$.

\begin{definition} \label{Def32}
\emph{Normalised row sum}:
Let $(N,A) \in \mathcal{R}^n$ be a ranking problem.
The \emph{normalised row sum} score $p_i(A)$ of object $X_i \in N$ is given by $\mathbf{p}(A) = D^{-} \mathbf{s} = D^{-} R \mathbf{e}$, that is, $p_i(A) = \sum_{j=1}^n r_{ij} / d_i$ for all $X_i \in N$.
\end{definition}

In addition, a preference over a ``strong'' object is not necessarily equal to a preference over a ``weak'' object. It can be taken into account by considering the entire structure of the comparisons.

The Laplacian matrix $L = \left[ \ell_{ij} \right] \in \mathbb{R}^{n \times n}$ of the graph $G$ is given by $\ell_{ij} = -m_{ij}$ for all $i \neq j$ and $\ell_{ii} = d_{i}$ for all $i = 1,2, \dots n$.

\begin{definition} \label{Def33}
\emph{Least squares}:
Let $(N,A) \in \mathcal{R}^n$ be a ranking problem. The least squares scores can be obtained via the following least squared errors estimation:
\[
\min_{\mathbf{q} \in \mathbb{R}^n} \sum_{X_i, X_j \in N} m_{ij} \left[ \frac{r_{ij}}{m_{ij}} - \left( q_i - q_j \right) \right]^2.
\]
The \emph{least squares} score $q_i(A)$ of object $X_i \in N$ is given by $L \mathbf{q}(A) = \mathbf{s}(A) = R \mathbf{e}$, that is, $d_i q_i(A) = s_i(A) + \sum_{j=1}^n m_{ij} q_{j}(A)$ for all $X_i \in N$.
\end{definition}

The Laplacian matrix $L$ is a singular matrix, its rank equals $n-k$, where $k$ is the number of (weakly) connected components in the graph $G$. Consequently, the system of linear equations in Definition~\ref{Def33} does not have a unique solution. This can be ensured by adding the equation $\sum_{X_i \in K} q_i = 0$ for each connected component of nodes $K \subseteq N$ \citep{KaiserSerlin1978, ChebotarevShamis1999, BozokiFulopRonyai2010, CaklovicKurdija2017}.

An extensive analysis and a graph interpretation of the least squares method, as well as an overview of its origin, is provided in \citet{Csato2015a}.

The procedure is also known as the Potential Method \citep{CaklovicKurdija2017}, or as the Logarithmic Least Squares Method on the field of pairwise comparison matrices \citep{BozokiFulopRonyai2010}.
\citet{LundySirajGreco2017} prove that the least squares scores are equal to the preference vector calculated from the spanning trees of the network, while \citet{BozokiTsyganok2019} extend this result to incomplete data.
\citet{Csato2018c} and \citet{Csato2019a} provide axiomatic characterizations in the case of complete preference lists.

\subsection{An axiomatic comparison} \label{Sec32}

In the following, some axiomatic properties are presented for the ranking of the objects to illustrate the differences between the three scoring methods.

\begin{axiom} \label{Axiom1}
\emph{Size invariance}:
A scoring method $f: \mathcal{R}^n \to \mathbb{R}^n$ is said to be \emph{size invariant} if $f_i(N,A) = f_j(N,A)$ holds for any ranking problem $(N,A) \in \mathcal{R}^n$ which has two different objects $X_i,X_j \in N$ such that $a_{ik} = \alpha a_{jk}$ and $a_{ki} = \alpha a_{kj}$ for all $X_k \in N \setminus \{ X_i, X_j \}$, furthermore, $a_{ij} = a_{ji} = 0$.
\end{axiom}

Size invariance means that if there exist two objects $X_i, X_j$ with exactly the same preference structure against any third object $X_k$, but one of them is $\alpha$ times ``larger'', then they should have the same rank.

\begin{proposition} \label{Prop31}
The row sum method violates size invariance. \\
The normalised row sum and least squares methods satisfy size invariance.
\end{proposition}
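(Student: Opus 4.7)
The plan is to translate the hypothesis of Axiom~\ref{Axiom1} into relations among the derived quantities $r_{ik}$, $m_{ik}$, $s_i$ and $d_i$, and then to read off the behaviour of each of the three scoring methods. Because $a_{ij} = a_{ji} = 0$, the $i$--$j$ comparison contributes nothing, so one only needs to work with sums over $k \in N \setminus \{ X_i, X_j \}$. The proportionalities $a_{ik} = \alpha a_{jk}$ and $a_{ki} = \alpha a_{kj}$ then give $r_{ik} = \alpha r_{jk}$ and $m_{ik} = \alpha m_{jk}$, whence $s_i = \alpha s_j$ and $d_i = \alpha d_j$ after summation over $k$.

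These two identities dispatch the first two methods immediately. For the row sum, a three-object counterexample suffices: take $n = 3$ and the only nonzero entries $a_{13} = 2$, $a_{23} = 1$ (so $\alpha = 2$ and $X_k = X_3$); then $s_1 = 2 \neq 1 = s_2$, violating the axiom. For the normalised row sum, the ratio $s_i / d_i$ equals $(\alpha s_j)/(\alpha d_j) = s_j / d_j$ by direct cancellation whenever $d_j > 0$, and both scores vanish when $X_i$ and $X_j$ are isolated.

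The substantive step is the least squares method, which I would reduce to linear algebra via the stationarity condition $d_i q_i = s_i + \sum_{j=1}^n m_{ij} q_j$ from Definition~\ref{Def33}. Writing this equation for $X_i$, using $m_{ij} = 0$, and substituting the proportional quantities yields
\[
\alpha d_j q_i = \alpha s_j + \sum_{k \neq i, j} \alpha m_{jk} q_k,
\]
and, after dividing by $\alpha$, this is precisely the equation that stationarity for $X_j$ produces for $q_j$ (since $m_{ji} = 0$). Hence, in any solution of $L \mathbf{q} = \mathbf{s}$, the entries $q_i$ and $q_j$ satisfy one and the same linear equation, so that $q_i = q_j$ whenever $d_j > 0$.

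The main obstacle I anticipate is the rank defect of $L$ and the associated need for the component-wise normalisation $\sum_{X_i \in K} q_i = 0$. I would handle this by splitting into two cases. If $d_j \neq 0$, then (for $\alpha \neq 0$) any $X_k$ adjacent to $X_j$ is adjacent to $X_i$ and vice versa, so $X_i$ and $X_j$ lie in the same connected component, in which the symmetric role of $X_i$ and $X_j$ means that adding the normalisation equation does not break the equality $q_i = q_j$ derived above. If $d_j = 0$, then $d_i = 0$ too, so each of $X_i$, $X_j$ forms a singleton component whose normalisation pins the corresponding score to zero, again giving $q_i = q_j$. This completes the argument for all admissible solutions.
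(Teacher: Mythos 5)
Your proof is correct and follows essentially the same route as the paper: derive $s_i=\alpha s_j$, $d_i=\alpha d_j$ and $m_{ik}=\alpha m_{jk}$ from the hypothesis, exhibit a violation for row sum, cancel $\alpha$ for the normalised row sum, and compare the two stationarity equations of $L\mathbf{q}=\mathbf{s}$ for least squares. You merely spell out in full (explicit counterexample, treatment of the rank defect of $L$ and the $d_j=0$ case) what the paper's proof leaves implicit in its one-line remark that the least squares claim ``follows from $m_{ik}=\alpha m_{jk}$''.
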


\begin{proof}
\emph{Row sum}:
It can be checked that $s_i = \alpha s_j$ for two different objects $X_i,X_j \in N$ such that $a_{ik} = \alpha a_{jk}$ and $a_{ki} = \alpha a_{kj}$ for all $X_k \in N \setminus \{ X_i, X_j \}$, furthermore, $a_{ij} = a_{ji} = 0$.

\emph{Normalised row sum}:
$d_i = \alpha d_j$ also holds for two different objects $X_i,X_j \in N$ such that $a_{ik} = \alpha a_{jk}$ and $a_{ki} = \alpha a_{kj}$ for all $X_k \in N \setminus \{ X_i, X_j \}$, furthermore, $a_{ij} = a_{ji} = 0$.

\emph{Least squares}:
It follows from $m_{ik} = \alpha m_{jk}$ for all $X_k \in N$, which holds if there are two different objects $X_i,X_j \in N$ such that $a_{ik} = \alpha a_{jk}$ and $a_{ki} = \alpha a_{kj}$ for all $X_k \in N \setminus \{ X_i, X_j \}$, furthermore, $a_{ij} = a_{ji} = 0$.
\end{proof}

Size invariance is a desirable property in the context of university rankings because a method violating this property may favour institutions only because of their size, as we will see later in the case of row sum.

\begin{definition} \label{Def34}
\emph{Bridge set}:
Let $(N,A) \in \mathcal{R}^n$ be a ranking problem. An object set $\emptyset \neq B \subseteq N$ is called \emph{bridge set} if there exists $N^1, N^2 \subseteq N$ such that $N^1 \cup B \cup N^2 = N$, $N^1 \cap N^2 = \emptyset$, and $m_{ij} = 0$ for all $X_i \in N^1$ and $X_j \in N^2$, furthermore, $m_{ik} = m_{i \ell}$ for all $X_i \in N^1$ and $X_k, X_\ell \in B$.
\end{definition}

\begin{remark}
The concept of bridge set is a common generalisation of \emph{bridge player} (when $|B|=1$) \citep{Gonzalez-DiazHendrickxLohmann2014} and \emph{macrovertex} (when $N^2 = \emptyset$) \citep{Chebotarev1994, Csato2019d}.
\end{remark}

\begin{axiom} \label{Axiom2}
\emph{Bridge set independence}:
A scoring method $f: \mathcal{R}^n \to \mathbb{R}^n$ is said to be \emph{bridge set independent} if $f_i(N,A) \geq f_j(N,A) \iff f_i(N,A') \geq f_j(N,A')$ holds for all $X_i, X_j \in N^1$ in the case of any two ranking problems $(N,A),(N,A') \in \mathcal{R}^n$ with a bridge set $B$ such that $a_{k \ell} = a_{k \ell}'$ for all $\{ X_k,X_\ell \} \cap N^1 \neq \emptyset$.
\end{axiom}

Bridge set independence means that the order of objects in the set $N^1$  is independent of the preferences between the objects outside $N^1$.

\begin{remark}
\emph{Macrovertex independence} \citep{Chebotarev1994, Csato2019d} is a particular case of bridge set independence when $N^2 = \emptyset$.
\end{remark}

\begin{axiom} \label{Axiom3}
\emph{Bridge set autonomy}:
A scoring method $f: \mathcal{R}^n \to \mathbb{R}^n$ is said to be \emph{bridge set autonomous} if $f_k(N,A) \geq f_\ell(N,A) \iff f_k(N,A') \geq f_\ell(N,A')$ holds for all $X_k, X_\ell \in B \cup N^2$ in the case of any two ranking problems $(N,A),(N,A') \in \mathcal{R}^n$ with a bridge set $B$ such that $a_{ij} = a_{ij}'$ for all $\{ X_i,X_j \} \cap \left( B \cup N^2 \right) \neq \emptyset$.
\end{axiom}

Bridge set autonomy requires the order of objects to remain the same in the set $B \cup N^2$ if only the preferences inside $N^1$ change.

\begin{remark}
Bridge set autonomy is an extension of \emph{macrovertex autonomy} \citep{Csato2019d}, which requires that $N^2 = \emptyset$.
\end{remark}

\begin{proposition} \label{Prop32}
The row sum and normalised row sum methods satisfy bridge set independence and bridge set autonomy.
\end{proposition}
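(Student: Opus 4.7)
The plan is to verify both axioms directly from the definitions, reading off the consequences of the bridge-set structure on the preference matrix. The two facts I would extract from Definition~\ref{Def34} are: (a) if $X_i\in N^1$ and $X_j\in N^2$ then $m_{ij}=0$, which forces $a_{ij}=a_{ji}=0$ and hence $r_{ij}=0$; and (b) the row sum $s_i$ and the degree $d_i$ of any object $X_i$ depend only on the $i$-th row and column of $A$.

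For \emph{bridge set independence} of the row sum, I would fix $X_i\in N^1$ and split
\[
s_i(A)=\sum_{X_j\in N^1}r_{ij}+\sum_{X_j\in B}r_{ij}+\sum_{X_j\in N^2}r_{ij}.
\]
Fact (a) kills the last sum, and every entry appearing in the remaining terms is of the form $a_{k\ell}$ with at least one endpoint in $N^1$, so the hypothesis $a_{k\ell}=a'_{k\ell}$ on such pairs gives $s_i(A)=s_i(A')$ for every $X_i\in N^1$. For \emph{bridge set autonomy}, I would fix $X_i\in B\cup N^2$ and appeal to fact (b): every entry entering $s_i$ involves $X_i$ itself, so it lies in the set protected by the autonomy hypothesis, whence $s_i(A)=s_i(A')$. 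In either case the induced ranking on the relevant index set is identical because all scores coincide.

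The normalised row sum proof is a verbatim duplicate with $d_i$ in place of $s_i$: the degree satisfies the same two structural facts (it is a sum of $m_{ij}$ across $j$, and $m_{ij}$ vanishes on the $N^1$--$N^2$ cross pairs). With both $s_i$ and $d_i$ invariant on the appropriate index set in each axiom, their ratio $p_i=s_i/d_i$ is invariant as well, and the ordering is preserved.

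I do not anticipate a serious obstacle: the entire content is that a bridge-set partition decomposes the comparison graph into two subgraphs glued along $B$, and the row sum and normalised row sum are purely local statistics of that graph. Notably, the balance clause $m_{ik}=m_{i\ell}$ for $X_i\in N^1$ and $X_k,X_\ell\in B$ in Definition~\ref{Def34} is not needed for either method here, so I would flag this in the write-up; presumably it becomes essential for the least squares method, which is not part of this proposition.
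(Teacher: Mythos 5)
Your argument is correct and is essentially the paper's own proof spelled out in detail: both rest on the observation that $s_i$ and $d_i$ only involve entries $a_{k\ell}$ with $X_i \in \{X_k, X_\ell\}$, which are fixed by the hypothesis of each axiom, so the scores (and hence the rankings) on the relevant index set are unchanged. Your side remark that the balance condition $m_{ik}=m_{i\ell}$ is not needed here but becomes essential for least squares (Proposition~\ref{Prop33}) is also accurate.
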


\begin{proof}
\emph{Bridge set independence}:
It can be seen that $s_i(A) = s_i(A')$ and $d_i(A) = d_i(A')$ hold for all $X_i \in N^1$ in the case of any two ranking problems $(N,A),(N,A') \in \mathcal{R}^n$ with a bridge set $B$ such that $a_{k \ell} = a_{k \ell}'$ for all $\{ X_k,X_\ell \} \cap N^1 \neq \emptyset$.

\emph{Bridge set autonomy}:
It can be checked that $s_k(A) = s_k(A')$ and $d_k = d_k$ hold for all $X_k \in \left( B \cup N^2 \right)$ in the case of any two ranking problems $(N,A), (N,A') \in \mathcal{R}^n$ with a bridge set $B$ such that $a_{ij} = a_{ij}'$ for all $\{ X_i,X_j \} \cap \left( B \cup N^2 \right) \neq \emptyset$.
\end{proof}

\begin{proposition} \label{Prop33}
The least squares method satisfies bridge set independence and bridge set autonomy.
\end{proposition}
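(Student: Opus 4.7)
The plan is to work with the difference $\boldsymbol{\delta} = \mathbf{q}(A') - \mathbf{q}(A)$ between two least squares solutions and show that $\boldsymbol{\delta}$ reduces to a constant on $N^1$ (for bridge set independence) or on $B \cup N^2$ (for bridge set autonomy). Both axioms then follow, because adding a constant to a group of scores preserves the ordering within that group. Throughout, write $\mu_i = m_{ik}$ for $X_i \in N^1$ and any $X_k \in B$ (well-defined by the bridge set property), set $c_i = \sum_{X_k \in B} m_{ik} = |B|\mu_i$, and note that $m_{ij} = 0$ whenever $X_i \in N^1$ and $X_j \in N^2$.

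\emph{Bridge set independence.} Because $a_{k\ell} = a'_{k\ell}$ whenever $\{X_k,X_\ell\} \cap N^1 \neq \emptyset$, the quantities $s_i$, $d_i$, $c_i$, and $m_{ij}$ (for $X_j \in N^1 \cup B$) are preserved for all $X_i \in N^1$. Subtracting the least squares equation from Definition~\ref{Def33} for $\mathbf{q}(A)$ from the one for $\mathbf{q}(A')$ at any row $X_i \in N^1$, and using the bridge set structure to rewrite $\sum_{X_k \in B} m_{ik} \delta_k = c_i \bar{\delta}_B$ with $\bar{\delta}_B = |B|^{-1} \sum_{X_k \in B} \delta_k$, yields the Dirichlet-type system
\[
\bigl(L^{(1)} + C^{(1)}\bigr)\boldsymbol{\delta}_{N^1} \;=\; \bar{\delta}_B\, \mathbf{c},
\]
where $L^{(1)}$ is the Laplacian of the subgraph induced by $N^1$, $C^{(1)} = \mathrm{diag}(c_i)$, and $\mathbf{c} = (c_i)_{X_i \in N^1}$. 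Since $L^{(1)} \mathbf{1}_{N^1} = \mathbf{0}$, we obtain the identity $(L^{(1)} + C^{(1)})\mathbf{1}_{N^1} = \mathbf{c}$, so $\boldsymbol{\delta}_{N^1} = \bar{\delta}_B\, \mathbf{1}_{N^1}$ is a solution. Thus $\delta_i$ is the same constant for every $X_i \in N^1$, and the ordering of the least squares scores on $N^1$ is unchanged.

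\emph{Bridge set autonomy.} Here the roles are swapped: only entries with both endpoints in $N^1$ may change, so $s_k$, $d_k$, and $m_{kj}$ (for $X_j \in B \cup N^2$) are preserved for every $X_k \in B \cup N^2$. For $X_k \in B$, $\sum_{X_j \in N^1} m_{kj} \delta_j = \sum_{X_j \in N^1} \mu_j \delta_j =: \kappa$, which is a single constant independent of $X_k \in B$; for $X_k \in N^2$ the corresponding sum vanishes. The subtracted equations therefore become
\[
\bigl(L^{(2)} + F\bigr)\boldsymbol{\delta}_{B \cup N^2} \;=\; \kappa\, \mathbf{1}_B,
\]
with $L^{(2)}$ the Laplacian on $B \cup N^2$ and $F$ the diagonal matrix whose $k$-th entry equals $\sum_{X_j \in N^1} \mu_j$ if $X_k \in B$ and $0$ otherwise. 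The identity $(L^{(2)}+F)\mathbf{1}_{B \cup N^2} = (\sum_{X_j \in N^1}\mu_j)\mathbf{1}_B$ again shows that $\boldsymbol{\delta}_{B \cup N^2}$ can be taken to be a uniform multiple of $\mathbf{1}_{B \cup N^2}$, giving the ordering-preserving shift.

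The main delicate point is the non-uniqueness of the least squares solution: by the remark after Definition~\ref{Def33}, $\mathbf{q}$ is determined only up to an additive constant on each connected component of $G$, and the two matrices $L^{(1)}+C^{(1)}$ and $L^{(2)}+F$ may likewise have a nontrivial kernel. This is handled by observing that $L^{(1)}+C^{(1)}$ is positive definite on every component of the induced subgraph on $N^1$ that meets $B$ (and similarly for $L^{(2)}+F$); components of $N^1$ disconnected from $B$ are automatically isolated in $G$ (since $m_{ij}=0$ to $N^2$), so the normalisation of Definition~\ref{Def33} acts on them independently and can be chosen to realise the uniform shift. Once the normalisations of $\mathbf{q}(A)$ and $\mathbf{q}(A')$ are aligned componentwise, the uniform-shift solution is the $\boldsymbol{\delta}$ produced by the least squares system, and both axioms follow.
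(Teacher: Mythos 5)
Your proof is correct and rests on the same structural fact that drives the paper's argument: the bridge-set condition forces the difference of the two least squares vectors to be a uniform shift on the relevant block, so the ordering there is preserved. For bridge set autonomy the two proofs essentially coincide -- the paper verifies that $\Delta q_k=\Delta q_g=\beta/|B|$ on $B\cup N^2$ satisfies the remaining Laplacian equations and appeals to uniqueness, which is the same computation as your system $\bigl(L^{(2)}+F\bigr)\boldsymbol{\delta}_{B\cup N^2}=\kappa\,\mathbf{1}_B$. For bridge set independence the routes differ: the paper eliminates $\sum_{X_k\in B}q_k$ and claims the resulting $|N^1|\times|N^1|$ system is uniquely solvable with data independent of the outside, hence identical scores on $N^1$; in fact that reduced system always has the constant vector in its kernel, so the $N^1$-scores are determined only up to an additive constant -- exactly the uniform-shift statement you prove by noting $\bigl(L^{(1)}+C^{(1)}\bigr)\mathbf{1}_{N^1}=\mathbf{c}$ and invoking positive definiteness on the components of $N^1$ meeting $B$. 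In this respect your treatment of the uniqueness issue is tighter than the paper's wording, and since the shift $\bar{\delta}_B$ is genuinely nonzero in general, your conclusion (constant shift, not equal scores) is the accurate one. One small imprecision: on components of $N^1$ disconnected from $B$ the normalisation $\sum_{X_i\in K}q_i=0$ is fixed, not free to be ``chosen to realise the uniform shift'', so there $\boldsymbol{\delta}=\mathbf{0}$ rather than $\bar{\delta}_B$, and comparisons between such an isolated piece of $N^1$ and the part attached to $B$ are not actually covered by your argument; the paper's proof silently ignores this degenerate configuration as well, so it does not put you behind the paper, but the clean statement your method delivers is a uniform shift on each connected component, with the axiom following when the compared objects lie in the same component.
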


\begin{proof}
\emph{Bridge set independence}:
Consider the linear equations for an arbitrary object $X_k \in B$ and $X_g \in N^2$, respectively:
\begin{equation} \label{eq1}
d_k q_k - \sum_{X_i \in N^1} m_{ki} q_i - \sum_{X_\ell \in B} m_{k \ell} q_\ell - \sum_{X_h \in N^2} m_{kh} q_h = s_k;
\end{equation}
\begin{equation} \label{eq2}
d_g q_g - \sum_{X_\ell \in B} m_{g \ell} q_\ell - \sum_{X_h \in N^2} m_{gh} q_h = s_g.
\end{equation}

Note that $m_{ki} = m_{\ell i} = \bar{m}_i$ for all $X_i \in N^1$ and $X_k, X_\ell \in B$ since $B$ is a bridge set.
Sum up the $|B|$ equations of type~\eqref{eq1} and the $|N^2|$ equations of type~\eqref{eq2}, which leads to:
\begin{equation} \label{eq3}
\sum_{X_i \in N^1} \bar{m}_i \left( \sum_{X_k \in B} q_k - |B| q_i \right) = - \sum_{X_i \in N^1} s_i.
\end{equation}
Take also the linear equation for an arbitrary object $X_i \in N^1$:
\begin{equation} \label{eq4}
d_i q_i - \sum_{X_j \in N^1} m_{ij} q_j - \sum_{X_k \in B} \bar{m}_i q_k = s_i.
\end{equation}

$\sum_{X_k \in B} q_k$ can be substituted from equation~\eqref{eq3} into the $|N^1|$ equations of type~\eqref{eq4}. This system consists of $|N^1|$ equations and the same number of unknowns. It should have a unique solution since it has been obtained by pure substitution of formulas. As the coefficients of the system do not depend on the preferences outside the object set $N^1$, the weights of these objects are the same in the ranking problems $(N,A)$ and $(N,A')$.

\emph{Bridge set autonomy}: Analogously to the proof of bridge set independence, $\sum_{X_k \in B} q_k$ can be substituted from equation~\eqref{eq3} into the $|N^1|$ equations of type~\eqref{eq4}. This system consists of $|N^1|$ equations and the same number of unknowns. It should have a unique solution, from which $q_i(A)$ and $q_i(A')$ can be obtained for all $X_i \in N^1$, respectively. 

Introduce the notation $\Delta q_i = q_i(A') - q_i(A)$ for all $X_i \in N$. $\sum_{X_i \in N^1} s_i(A) = \sum_{X_i \in N^1} s_i(A')$ holds because only the preferences inside the object set $N^1$ may change, so equation~\eqref{eq3} implies that
\begin{equation} \label{eq5}
\sum_{X_k \in B} \Delta q_k = \frac{\sum_{X_i \in N^1} \bar{m}_i \Delta q_i}{\sum_{X_i \in N^1} \bar{m}_i} = \beta.
\end{equation}

It will be shown that $\Delta q_k = \Delta q_g = \beta / |B|$ for all $X_k \in B$ and $X_g \in N^2$. Since the system of linear equations $L \mathbf{q}(A) = \mathbf{s}(A)$ has a unique solution after normalisation, it is enough to prove that $\Delta q_k = \Delta q_g = \beta / |B|$ satisfies equations of types~\eqref{eq1} and \eqref{eq2}. The latter statement comes from $d_g = \sum_{X_\ell \in B} m_{g \ell} + \sum_{X_h \in N^2} m_{gh}$ as there are no preferences between the objects in sets $N^1$ and $N^2$.

Take an equation of type~\eqref{eq1} and note that $s_k(A') - s_k(A) = 0$ because only the preferences inside the object set $N^1$ may change:
\begin{equation} \label{eq6}
d_k \Delta q_k - \sum_{X_i \in N^1} \bar{m}_i \Delta q_i - \sum_{X_\ell \in B} m_{k \ell} \Delta q_\ell - \sum_{X_h \in N^2} m_{kh} \Delta q_h = 0.
\end{equation}
Since $d_k = \sum_{X_i \in N^1} \bar{m}_i + \sum_{X_\ell \in B} m_{k \ell} + \sum_{X_h \in N^2} m_{kh}$, with the use of the assumption $\Delta q_k = \Delta q_g = \beta / |B|$ for all $X_k \in B$ and $X_g \in N^2$, we get
\begin{equation} \label{eq7}
\sum_{X_i \in N^1} \bar{m}_i \frac{\beta}{|B|} - \sum_{X_i \in N^1} \bar{m}_i \Delta q_i = 0,
\end{equation}
which holds due to the definition of $\beta$ in equation~\eqref{eq5}. This completes the proof.
\end{proof}

\begin{axiom} \label{Axiom4}
\emph{Bridge player independence} \citep{Gonzalez-DiazHendrickxLohmann2014}:
A scoring method $f: \mathcal{R}^n \to \mathbb{R}^n$ is said to be \emph{bridge player independent} if $f_i(N,A) \geq f_j(N,A) \iff f_i(N,A') \geq f_j(N,A')$ holds for all $X_i, X_j \in \left( N^1 \cup B \right)$ in the case of any two ranking problems $(N,A),(N,A') \in \mathcal{R}^n$ with a bridge set $|B| = 1$ such that $a_{k \ell} = a_{k \ell}'$ for all $\{ X_k,X_\ell \} \cap N^1 \neq \emptyset$.
\end{axiom}

According to bridge player independence, in a hypothetical world consisting of two sets of objects connected only by a single object called bridge player, the relative rankings within each set of objects are not influenced by the preferences among the objects in the other set.

\begin{proposition} \label{Prop34}
The least squares method satisfies bridge player independence.
\end{proposition}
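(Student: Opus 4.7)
\emph{Proof plan.} The plan is to reduce bridge player independence to the invariance of the relative scores $t_i := q_i - q_b$ for $X_i \in N^1$, where $B = \{X_b\}$. This is the natural reduction because Axiom~\ref{Axiom4} concerns pairs $X_i,X_j \in N^1 \cup B$ and $q_i - q_j = t_i - t_j$ holds whenever both indices lie in $N^1$, while $q_i - q_b = t_i$ covers the remaining pairs involving the bridge player. So it suffices to show that $\mathbf{t}$ depends only on those preferences that the hypothesis of Axiom~\ref{Axiom4} keeps fixed.

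First, I would take equation~\eqref{eq4} for an arbitrary $X_i \in N^1$. Since $|B|=1$ and there are no edges between $N^1$ and $N^2$, the degree splits as $d_i = \sum_{X_j \in N^1} m_{ij} + \bar m_i$ with $\bar m_i = m_{ib}$, which rewrites equation~\eqref{eq4} in the symmetric form
\[
\sum_{X_j \in N^1} m_{ij}(q_i - q_j) + \bar m_i (q_i - q_b) = s_i \qquad (X_i \in N^1).
\]
Substituting $t_i = q_i - q_b$ produces the $|N^1| \times |N^1|$ linear system $(\tilde L + \bar M)\mathbf{t} = \mathbf{s}_{N^1}$, where $\tilde L$ is the Laplacian of the subgraph induced on $N^1$ and $\bar M = \mathrm{diag}(\bar m_i)$.

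Every entry of $\tilde L$, $\bar M$ and $\mathbf{s}_{N^1}$ records only quantities $m_{k\ell}$ or $r_{k\ell}$ with $\{X_k,X_\ell\} \cap N^1 \neq \emptyset$, and these are exactly the numbers the hypothesis of Axiom~\ref{Axiom4} freezes. Hence, once the coefficient matrix is known to be invertible, the unique $\mathbf{t}$ coincides for $(N,A)$ and $(N,A')$, giving $q_i(A) - q_b(A) = q_i(A') - q_b(A')$ for every $X_i \in N^1$, which preserves the required order on $N^1 \cup B$. The main obstacle is therefore the invertibility of $\tilde L + \bar M$: it is positive semi-definite as the sum of a Laplacian and a non-negative diagonal matrix, and it becomes positive definite exactly when every connected component of the subgraph induced on $N^1$ contains at least one vertex joined to $X_b$. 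I would dispose of this point by restricting attention to a single connected component of $G$ at a time, since a component of the $N^1$-subgraph disconnected from $X_b$ would form its own component of $G$ and be normalised separately, in line with the convention introduced after Definition~\ref{Def33}.
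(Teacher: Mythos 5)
Your argument is correct in substance but takes a genuinely different route from the paper: the paper proves Proposition~\ref{Prop34} simply by citing \citet[Proposition~6.1]{Gonzalez-DiazHendrickxLohmann2014}, whereas you give a self-contained elimination argument in the spirit of the paper's own proof of Proposition~\ref{Prop33}. Writing $t_i = q_i - q_b$, rewriting equation~\eqref{eq4} as $\sum_{X_j\in N^1} m_{ij}(q_i-q_j)+\bar m_i(q_i-q_b)=s_i$ and obtaining the reduced system $(\tilde L+\bar M)\mathbf{t}=\mathbf{s}_{N^1}$ is exactly right, and the observation that every coefficient and every right-hand side entry involves only pairs meeting $N^1$, hence is frozen by the hypothesis of Axiom~\ref{Axiom4}, finishes the job once invertibility is secured. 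What your proof buys over the paper's is transparency: it exhibits the mechanism (the bridge player enters the $N^1$-equations only through the diagonal perturbation $\bar M$) instead of deferring to the literature, and it unifies the treatment with the substitution technique already used for bridge set independence and autonomy.

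The one weak point is the final paragraph. If some component of the subgraph induced on $N^1$ has no edge to $X_b$, restricting attention to components of $G$ and normalising them separately does not ``dispose of'' the issue: Axiom~\ref{Axiom4} quantifies over all pairs in $N^1\cup B$, including pairs lying in different components of $G$, and for such pairs the conclusion can genuinely fail. The zero-sum normalisation of the component containing $X_b$ runs over its $N^2$-vertices as well, so a change of preferences inside $\{X_b\}\cup N^2$ shifts every score in that component by a common constant relative to an untouched $N^1$-component (for instance, strengthening $r_{bg}$ for some $X_g\in N^2$ raises $q_b$ and, since $q_i-q_b$ is fixed, every $q_i$ with $X_i\in N^1$ in that component, while the scores in the other component stay put), which can break a tie across components. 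Hence no proof can close this case; the clean fix is to assume, as in \citet{Gonzalez-DiazHendrickxLohmann2014} and implicitly in the paper, that the comparison graph is connected. Then every component of the $N^1$-subgraph must contain a neighbour of $X_b$ (there are no $N^1$--$N^2$ edges), so $\tilde L+\bar M$ is positive definite and your argument goes through without any case distinction.
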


\begin{proof}
See \citet[Proposition~6.1]{Gonzalez-DiazHendrickxLohmann2014}.
\end{proof}

Similarly to macrovertex independence and macrovertex autonomy, we have attempted to generalise bridge player independence in a way that the extended property is satisfied by the least squares method without success.

Bridge set independence, bridge set autonomy, and bridge player independence are relevant properties for university rankings, although their conditions seldom hold in practice. Intuitively, these axioms suggest that if two sets of institutions are considered, then the relative rankings within each set are not much influenced by the preferences inside the other set (bridge set independence and autonomy). Furthermore, after fixing the preferences within both sets, the positions of universities from either set in the overall ranking are mainly determined by the preferences between the two sets. For example, the relative ranking among the faculties of engineering should be almost independent of the preferences involving other faculties, and their positions in the overall ranking should not be influenced by the preferences among the faculties of other fields when all the three properties are satisfied.

Now an illustration is provided for the three scoring methods and the four axioms.

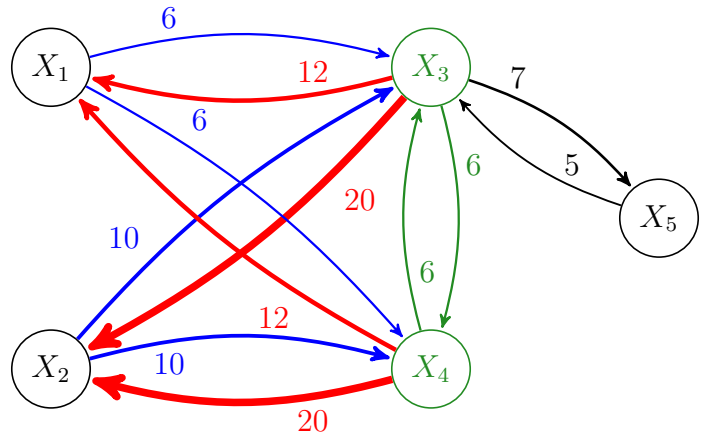
\begin{figure}[!htbp]
\centering
\begin{tikzpicture}[scale=1, auto=center, transform shape, >=triangle 45, ->, >=stealth', semithick, shorten >=1pt]
\tikzstyle{every state}=[draw,shape=circle];
  \node[state] (n1) at (0,4) {$X_1$};
  \node[state] (n2) at (0,0) {$X_2$};
  \node[state, ForestGreen] (n3) at (5,4) {$X_3$};
  \node[state, ForestGreen] (n4) at (5,0) {$X_4$};
  \node[state] (n5) at (8,2) {$X_5$};

  \path
(n1) edge [bend left=15, line width=0.3mm, blue] node [above, near start] {6}  (n3)
(n3) edge [bend left=15, line width=0.6mm, red] node [above, near start] {12} (n1)
(n2) edge [bend left=10, line width=0.5mm, blue] node [above left, near start] {10} (n3)
(n3) edge [bend left=10, line width=1.0mm, red] node [below right, near start] {20} (n2)
(n1) edge [bend left=10, line width=0.3mm, blue] node [above right, near start] {6}  (n4)
(n4) edge [bend left=10, line width=0.6mm, red] node [below left, near start] {12} (n1)
(n2) edge [bend left=15, line width=0.5mm, blue] node [below, near start] {10} (n4)
(n4) edge [bend left=15, line width=1.0mm, red] node [below, near start] {20} (n2)
(n3) edge [bend left=15, line width=0.3mm, ForestGreen] node [right, near start] {6} (n4)
(n4) edge [bend left=15, line width=0.3mm, ForestGreen] node [right, near start] {6}  (n3)
(n3) edge [bend left=15, line width=0.35mm] node [above, near start] {7} (n5)
(n5) edge [bend left=15, line width=0.25mm] node [above, near start] {5}  (n3);
\end{tikzpicture}
\caption{Preferences between the objects in Example~\ref{Examp31}}
\label{Fig1}
\end{figure}

\begin{example} \label{Examp31}
Consider the ranking problem $(N,A) \in \mathcal{R}^5$ shown in Figure~\ref{Fig1}, where the directed edges represent the preferences with their weights written on near the start of the arrow, and their thickness is proportional to the strength of the preferences.

The preference, the results, and the matches matrices are as follows:
\[
A = \left[
\begin{array}{K{2em} K{2em} K{2em} K{2em} K{2em}}  
    0     & 0     & 6     & 6     & 0 \\
    0     & 0     & 10    & 10    & 0 \\
    12    & 20    & 0     & 6     & 7 \\
    12    & 20    & 6     & 0     & 0 \\
    0     & 0     & 5     & 0     & 0 \\
\end{array}
\right]
\text{, }
R = \left[
\begin{array}{K{2em} K{2em} K{2em} K{2em} K{2em}} 
    0     & 0     & -6    & -6    & 0 \\
    0     & 0     & -10   & -10   & 0 \\
    6     & 10    & 0     & 0     & 2 \\
    6     & 10    & 0     & 0     & 0 \\
    0     & 0     & -2    & 0     & 0 \\
\end{array}
\right],
\text{ and} 
\]
\[
M = \left[
\begin{array}{K{2em} K{2em} K{2em} K{2em} K{2em}} 
    0     & 0     & 18    & 18    & 0 \\
    0     & 0     & 30    & 30    & 0 \\
    18    & 30    & 0     & 12    & 12 \\
    18    & 30    & 12    & 0     & 0 \\
    0     & 0     & 12    & 0     & 0 \\
\end{array}
\right].
\]

Objects $X_1$ and $X_2$ satisfy the conditions of the axiom size invariance, $X_2$ is $\alpha = 5/3$ times larger than object $X_1$. It remains true even if the preferences between the other three objects (the green and black directed edges) change.
For the sake of visibility, the favourable preferences of objects $X_1$ and $X_2$ are indicated by blue, and their unfavourable preferences by red colour.

Objects $X_3$ and $X_4$ form a bridge set $B$ with $N^1 = \{ X_1, X_2 \}$ and $N^2 = \{ X_5 \}$.
It holds even if the preferences in the object set $B \cup N^2 = \{ X_3, X_4, X_5 \}$ (the green and black directed edges) change. In addition, only the sum of total preferences concerning objects $X_1$ and $X_2$ should remain the same, that is, weights can be redistributed on the blue and red edges between the same nodes.

Object $X_3$ is a bridge player with $N^1 = \{ X_1, X_2, X_4 \}$ and $N^2 = \{ X_5 \}$.

\begin{table}[ht!]
\caption{Scores of the objects in Example~\ref{Examp31}}
\label{Table1}
\centering
\begin{tabularx}{0.8\textwidth}{c RRR} \toprule
Object & Row sum & Norm. row sum & Least squares  \\ \midrule
$X_1$ & $-12$  & $-20/60$    & $-1/6$  \\ 
$X_2$ & $-20$  & $-20/60$    & $-1/6$  \\ 
$X_3$ & $18$   & $15/60$	 & $1/6$   \\  
$X_4$ & $16$   & $16/60$     & $1/6$   \\ 
$X_5$ & $-2$   & $-10/60$    & $0$     \\ \bottomrule
\end{tabularx} 
\end{table}

The scores according to the three methods are shown in Table~\ref{Table1}. The row sum and least squares scores sum to $0$, but this condition does not hold for the normalised row sum.

Size invariance implies that objects $X_1$ and $X_2$ have the same rank. According to Proposition~\ref{Prop31}$  $, this is satisfied only by the normalised row sum and least squares methods. Therefore, it is difficult to argue for the row sum.

According to bridge set independence, the indifference $X_1 \sim X_2$ should be preserved even if the preferences between the other three objects (the green and black directed edges) change.

Compare the normalised row sum and least squares scores. The first favours object $X_4$ over $X_3$, but the sole difference between them is that $X_3$ has some extra preferences with object $X_5$, which seems to be a weak reason to distinguish between the strength of $X_3$ and $X_4$. This is provided by the axiom bridge player independence.

Finally, bridge player autonomy ensures that the relative ranking of $X_3$, $X_4$, and $X_5$ is not influenced by the existence of direct preferences between the objects $X_1$ and $X_2$.
\end{example}

\begin{proposition} \label{Prop35}
The row sum and normalised row sum methods violate bridge player independence.
\end{proposition}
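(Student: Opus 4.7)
The plan is to construct explicit counterexamples that reuse the bridge player structure of Example~\ref{Examp31}. In that example $X_3$ is a bridge player with $N^1 = \{X_1, X_2, X_4\}$ and $N^2 = \{X_5\}$, so $N^1 \cup B = \{X_1, X_2, X_3, X_4\}$; the admissibility condition $a_{k\ell} = a_{k\ell}'$ for all $\{X_k, X_\ell\} \cap N^1 \neq \emptyset$ leaves only the entries $a_{35}$ and $a_{53}$ free to vary between $(N,A)$ and $(N,A')$. The goal in each case is to perturb these two entries so that the comparison of $X_3$ with $X_4$ (both in $N^1 \cup B$) flips.

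For the row sum, Example~\ref{Examp31} gives $s_3 = 18 > 16 = s_4$, hence $X_3 \succ X_4$. I would define $(N,A')$ by taking $a_{35}' = 0$ and $a_{53}' = 10$ and leaving every other entry as in $A$. The row sums $s_1, s_2, s_4, s_5$ are then unchanged, while $s_3(A') = 6 + 10 + 0 + 0 + (0-10) = 6 < 16 = s_4(A')$; thus $X_4 \succ X_3$ in $(N,A')$, contradicting bridge player independence.

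For the normalised row sum, Example~\ref{Examp31} gives $p_3 = 15/60 < 16/60 = p_4$, hence $X_4 \succ X_3$. I would define $(N,A'')$ by setting $a_{35}'' = a_{53}'' = 0$ and leaving every other entry fixed; this is admissible because the only altered entries have both endpoints in $B \cup N^2$. A direct computation gives $s_3(A'') = 16$ and $d_3(A'') = 18 + 30 + 12 = 60$, so $p_3(A'') = 16/60 = p_4(A'')$. Hence the strict inequality $p_4 > p_3$ in $(N,A)$ becomes equality in $(N,A'')$, and the biconditional required by bridge player independence fails. A strict reversal is equally easy to engineer, for instance with $a_{35}'' = 12$ and $a_{53}'' = 0$, which leaves $d_3$ fixed at $72$ but raises $s_3$ to $28$, so $p_3(A'') = 28/72 > 16/60 = p_4(A'')$.

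I do not anticipate any genuine obstacle here: the essence is simply that both $s_3$ and $d_3$ depend on the preferences between $X_3$ and $X_5$ while $s_4$ and $d_4$ do not, so perturbing $a_{35}$ and $a_{53}$ alone suffices to flip the comparison between $X_3$ and $X_4$. The only routine verification is that in each modification every entry with at least one endpoint in $N^1$ is preserved, which is immediate since only $(3,5)$ and $(5,3)$ are touched.
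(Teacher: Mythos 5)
Your proof is correct and takes essentially the same approach as the paper's: both construct counterexamples from Example~\ref{Examp31} by perturbing only the $X_3$--$X_5$ preferences (an admissible change since both endpoints lie in $B \cup N^2$) so that the comparison of $X_3$ and $X_4$ flips, the paper using $a'_{35}=3$ for the row sum and $a''_{35}=10$ for the normalised row sum where you use slightly different numerical perturbations.
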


\begin{proof}
Consider the ranking problem $(N,A) \in \mathcal{R}^5$ of Example~\ref{Examp31}.

Let $(N,A')$ be the ranking problem such that the preference matrix $A'$ is the same as $A$ except for $a'_{35} = 3 < 7 = a_{35}$. Then $s_3(A') = 14 < 16 = s_4(A')$, while $s_3(A) = 18 > 16 = s_4(A)$, showing the violation of bridge player independence by the row sum method.

Let $(N,A'')$ be the ranking problem such that the preference matrix $A''$ is the same as $A$ except for $a''_{35} = 10 > 7 = a_{35}$. Then $p_3(A'') = 7/25 > 16/60 = p_4(A'')$, while $p_3(A) = 15/60 < 16/60 = p_4(A)$, showing the violation of bridge player independence by the normalised row sum method.
\end{proof}

\begin{table}[ht!]
\caption{Axiomatic comparison of scoring methods}
\label{Table2}
\centering
\begin{tabularx}{1\textwidth}{lCcC} \toprule
    Axiom & Row sum & Normalised row sum & Least squares \\ \midrule
    Size invariance & \textcolor{BrickRed}{\ding{55}} & \textcolor{ForestGreen}{\ding{52}} & \textcolor{ForestGreen}{\ding{52}} \\
    Bridge set independence & \textcolor{ForestGreen}{\ding{52}} & \textcolor{ForestGreen}{\ding{52}} & \textcolor{ForestGreen}{\ding{52}} \\
    Bridge set autonomy & \textcolor{ForestGreen}{\ding{52}} & \textcolor{ForestGreen}{\ding{52}} & \textcolor{ForestGreen}{\ding{52}} \\
    Bridge player independence & \textcolor{BrickRed}{\ding{55}} & \textcolor{BrickRed}{\ding{55}} & \textcolor{ForestGreen}{\ding{52}} \\ \bottomrule
\end{tabularx} 
\end{table}

Table~\ref{Table2} summarises the findings of our concise axiomatic analysis. It turns out that the least squares method shares the advantages of the other two procedures, while it is the only one satisfying bridge player independence. \citet{Gonzalez-DiazHendrickxLohmann2014} and \citet{CaklovicKurdija2017} discuss some further properties of this method, mainly with positive conclusions.

Hence, at this stage from theoretical reasons, we suggest applying the least squares method for ranking in similar problems.
However, any other well-known scoring procedures can be applied to rank the objects \citep{ChebotarevShamis1998a, ChebotarevShamis1999, Palacios-HuertaVolij2004, SlutzkiVolij2005, Gonzalez-DiazHendrickxLohmann2014, Kitti2016, BubboloniGori2018}, keeping in mind that there does not exist a perfect solution \citep{Csato2019d}.

The least squares method has also a growing list of successful applications, including ranking historical Go \citep{ChaoKouLiPeng2018} and tennis players \citep{BozokiCsatoTemesi2016}, teams in Swiss-system chess tournaments \citep{Csato2017c}, or the participating countries of the Eurovision Song Contest \citep{CaklovicKurdija2017}. It is used in international price comparisons by the OECD \citep{EltetoKoves1964,Szulc1964}, to evaluate movies on a subset of Netflix data \citep{JiangLimYaoYe2011}, and for obtaining an alternative quality of life ranking \citep{Petroczy2018, Petroczy2019}.

\section{Recovering preferences from the list of applications} \label{Sec4}

Our central assumption is that the applications of a student partially reveal real preferences.
This is far from true in the case of school choice mechanisms in general \citep{AbdulkadirougluSonmez2003}. However, the Hungarian centralised matching scheme applies the Gale-Shapley algorithm at its core \citep{BiroFleinerIrvingManlove2010, BiroKiselgof2015, AgostonBiroMcBride2016}, its college-oriented version until 2007, and the applicant-oriented variant since then \citep{Biro2008}.
In the applicant-oriented Gale-Shapley algorithm \citep{GaleShapley1962}, students cannot improve their fate by lying about their preferences \citep{DubinsFreedman1981}. While the college-oriented version does not satisfy this property, the difference of the two versions is negligible in practice, and a successful manipulation requires a lot of information, which is nearly impossible to obtain \citep{TeoSethuramanTan2001}.\footnote{~The actual algorithm is a heuristic that is close to the Gale-Shapley algorithm because of the existence of some specialities like ties or common quotas \citep{BiroFleinerIrvingManlove2010, BiroKiselgof2015, AgostonBiroMcBride2016}. Consequently, it remains strategy-proof only essentially, for example, when the applicants believe they have no chance to influence the score limits.}

Naturally, the whole preference list of an applicant always remains hidden. The exact rules governing the length of the rankings changed several times between 2001 and 2016. In the first years, there was no limit on the number of applications from a given student, but they were charged for each item after the first three. In recent years, it has been allowed to apply for at most five (six in 2016) places for a fixed price such that the state-financed and student-financed versions of the same programme count as one.

As a consequence, the applications of a student reveal only a part of her preferences. In the presence of such constrained lists, \citet[Proposition~4.2]{HaeringerKlijn2009} show that -- when the centralised allocation rule is the student-optimal Gale-Shapley algorithm -- the applicant can do no better than selecting some programmes among the acceptable ones and ranking them according to the real preferences.
Thus, it is assumed for a student that:
\begin{enumerate}
\item
(S)he \emph{prefers} an object to any other object having a worse position on her list of applications;
\item
Her preference between an object on her list of applications and an object not on her list of applications is \emph{unknown};
\item
Her preference between two objects not on her list of applications is \emph{unknown}.
\end{enumerate}

\citet{TelcsKosztyanTorok2016} do not follow our second and third assumptions, therefore, according to \citet[Proposition~4.2]{HaeringerKlijn2009}, the individual choices derived by them are not guaranteed to reflect the real preferences of the applicants, in contrast to our model. For example, unranked objects cannot be legitimately said to be less preferred than any of the ones on the list, since a student may disregard a programme when (s)he knows that (s)he has no chance to be admitted there.

Due to the rules of the centralised system, the same object may appear more than once on the list of a student. For instance, if courses are compared, then both the state-financed and the student-financed versions of a particular course may be present. Then we preserve only the first appearance of the given object and delete all of the others. This ensures that each applicant can have at most one preference between two different objects.
Furthermore, only the preferences concerning the first appearance of a given object reflect the real preferences of the applicant adequately. Assume that (s)he prefers faculty $A$ to $B$, so her list contains a particular programme of faculty $A$ in the first place and the same programme of faculty $B$ in the second place. Faculty $A$ also offers another programme, which is not the favourite of the student but (s)he applies for it because, for example, (s)he can achieve the score limit with a higher probability. Then it cannot be said that faculty $B$ is preferred to faculty $A$ in any sense.

However, the financing of the tuition may somewhat distort the picture.
Suppose that an applicant prefers faculty $A$ to faculty $B$. Both faculties offer the same programme in state-financed and student-financed forms such that the score limit of the former is above the score limit of the latter as natural. The applicant knows that (s)he has no chance to be admitted to the state-financed programme of faculty $A$, but (s)he wants to avoid paying the tuition, therefore her list contains the state-financed programme of faculty $B$ in the first place, the student-financed programme of faculty $A$ in the second, and the student-financed programme of faculty $B$ in the third position. Then our technique concludes that faculty $B$ is preferred to faculty $A$, which is opposite to the real preferences of the applicant.
Consequently, it may make sense to differentiate between the two forms of financing, that is, to derive preferences for the state-financed and student-financed programmes separately.
Nevertheless, applying for both the state-financed and student-financed versions of the same programme has no financial costs, so probably few students employ the strategy presented above.

Following these ideas, the preference matrix of any student can be determined. In order to aggregate them, a weighting scheme should be chosen.
At first sight, it might look that all contributions are equal, so each student should have the same weight, which will be called the \emph{unweighted} problem. In the unweighted preference matrix $A^{UW}$, the entry $a_{ij}^{UW}$ gives the number of applicants who prefer object $X_i$ to object $X_j$.

On the other hand, some students have a longer list of applications, hence, there is more information available on their preferences. Then the unweighted version essentially weights the students according to the number of preferences they have revealed \citep{CaklovicKurdija2017}. The equal contribution of each applicant can be achieved by introducing the weight $w_i = 1/k$ for student $i$ if (s)he has given $k$ preferences after the truncation of objects appearing more than once. This will be called the \emph{weighted} problem. In the weighted preference matrix $A^W$, each applicant, who has revealed at least one preference, increases the sum of entries in the (aggregated) preference matrix by one.

Another solution can be the \emph{moderately weighted} problem when the weight of student $i$ is $w_i = 1/ (\ell-1)$ if (s)he has given a (truncated) preference list of $\ell$ objects. In the moderately weighted preference matrix $A^{MW}$, each applicant, who has revealed at least one preference concerning object $i$, increases the sum of entries in the $i$th row and column of the (aggregated) preference matrix by one.

Finally, with respect to the form of financing, the \emph{adjusted} unweighted $\hat{A}^{UW}$, weighted $\hat{A}^{W}$, and moderately weighted $\hat{A}^{MW}$ preference matrices are introduced, respectively, by obtaining the state-financed and student-financed unweighted, weighted, and moderately weighted preference matrices separately as above, and correspondingly aggregating them.

This procedure allows for the comparison of any types of objects: higher education institutions (universities or colleges), faculties, courses, etc. It is also possible to present customised rankings specific to various types of students.
In the current paper, the ranking of Hungarian faculties will be discussed, as collectively revealed by all applicants participating in the system.

\begin{example} \label{Examp41}
Consider a student with the following list of applications:
\[
\left[
\begin{array}{cccccc}
1 & \text{SE--AOK} & \text{Medicine} & O & N & A \\
2 & \text{PTE--AOK} & \text{Medicine} & O & N & A \\
3 & \text{DE--AOK} & \text{Medicine} & O & N & K \\
4 & \text{SE--AOK} & \text{Medicine} & O & N & A \\
5 & \text{SE--FOK} & \text{Dentistry} & O & N & K \\
\end{array}
\right]
\]
Since the objects are the faculties, SE--AOK appears twice, from which the second is deleted. The preferences of the student over the four faculties are as follows:
\begin{eqnarray*}
\text{SE--AOK} & \succ & \text{PTE--AOK}; \\
\text{SE--AOK} & \succ & \text{DE--AOK}; \\
\text{SE--AOK} & \succ & \text{SE--FOK}; \\
\text{PTE--AOK} & \succ & \text{DE--AOK}; \\
\text{PTE--AOK} & \succ & \text{SE--FOK}; \\
\text{DE--AOK} & \succ & \text{SE--FOK}.
\end{eqnarray*}
Thus the applicant has provided six preferences.

If the state-financed and student-financed forms are treated separately, then there are only two revealed preferences:
\begin{eqnarray*}
\text{SE--AOK} & \succ & \text{PTE--AOK}; \qquad \qquad (A) \\
\text{DE--AOK} & \succ & \text{SE--FOK}. \qquad \qquad \quad (K)
\end{eqnarray*}

The objects are $X_1 = \text{SE--AOK}$, $X_2 = \text{PTE--AOK}$, $X_3 = \text{DE--AOK}$, and $X_4 = \text{SE--FOK}$.
The corresponding unweighted ($A^{UW}$), weighted ($A^{W}$), and moderately weighted ($A^{MW}$), as well as, adjusted unweighted ($\hat{A}^{UW}$), weighted ($\hat{A}^{W}$), and moderately weighted ($\hat{A}^{MW}$) preference matrices are
\[
A^{UW} = \left[
\begin{array}{K{2em} K{2em} K{2em} K{2em}} 
    0     & 1     & 1     & 1 \\
    0     & 0     & 1     & 1 \\
    0     & 0     & 0     & 1 \\
    0     & 0     & 0     & 0 \\
\end{array}
\right] \text{, }
A^{W} = \left[
\begin{array}{K{2em} K{2em} K{2em} K{2em}} 
    0     & 1/6   & 1/6   & 1/6 \\
    0     & 0     & 1/6   & 1/6 \\
    0     & 0     & 0     & 1/6 \\
    0     & 0     & 0     & 0   \\
\end{array}
\right],
\]
\[
A^{MW} = \left[
\begin{array}{K{2em} K{2em} K{2em} K{2em}} 
    0     & 1/3   & 1/3   & 1/3 \\
    0     & 0     & 1/3   & 1/3 \\
    0     & 0     & 0     & 1/3 \\
    0     & 0     & 0     & 0   \\
\end{array}
\right] \text{, and }
\hat{A}^{UW} = \left[
\begin{array}{K{2em} K{2em} K{2em} K{2em}} 
    0     & 1     & 0     & 0   \\
    0     & 0     & 0     & 0   \\
    0     & 0     & 0     & 1   \\
    0     & 0     & 0     & 0   \\
\end{array}
\right],
\]
with $\hat{A}^{UW} = \hat{A}^{W} = \hat{A}^{MW}$, respectively.
\end{example}

As we have already mentioned, once a preference matrix is obtained, any method of paired comparisons-based ranking can be used to rank the objects, including the procedures presented in Section~\ref{Sec31}. In our case, the objects are the faculties, and the row sum gives the number of ``net'' preferences, that is, the difference between the favourable and the unfavourable preferences of the faculty, while the normalised row sum is the ratio of ``net'' preferences to all preferences.
The more complicated least squares method has no such an expressive meaning, but it essentially adjusts ``net'' preferences (row sum) by taking into consideration the prestige of faculties that are compared with the given one. We will see that this modification can have a significant impact on the ranking.

\section{Results} \label{Sec5}

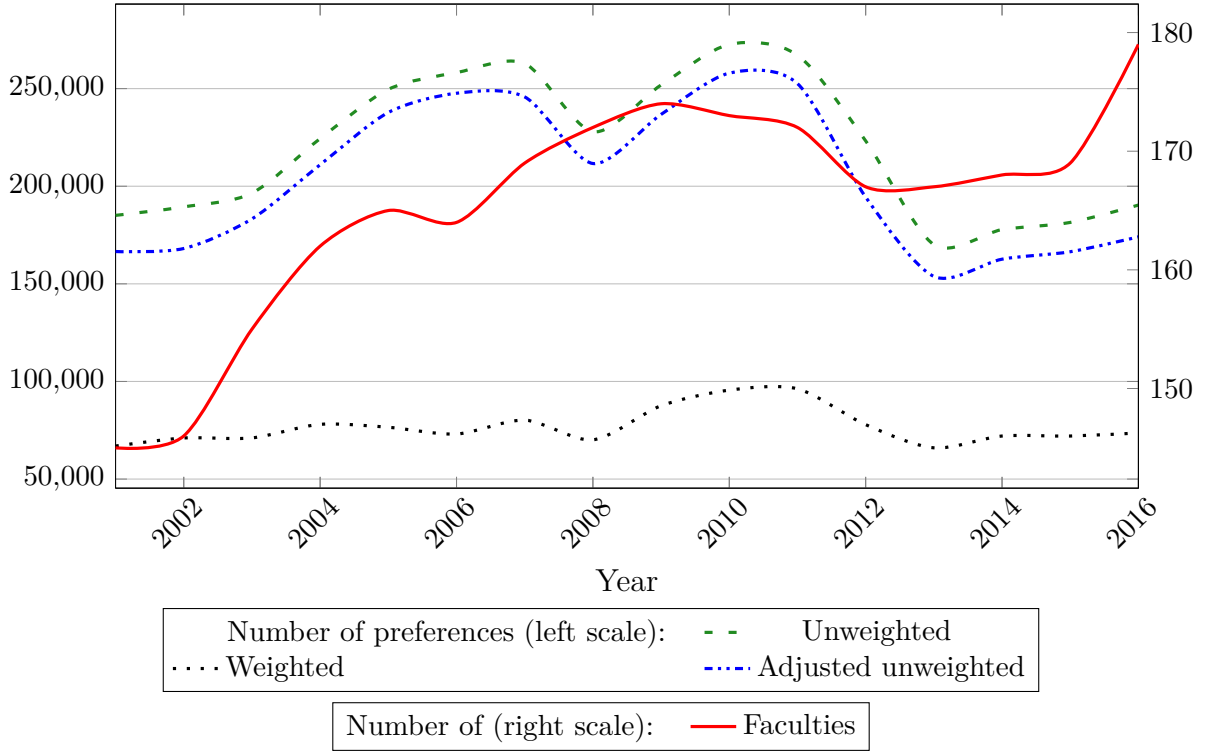
\begin{figure}[ht!]
\centering

\begin{tikzpicture}
\begin{axis}[width = 0.94\textwidth, 
height = 0.5\textwidth,
xmin = 2001,
xmax = 2016,
xlabel = Year,
x tick label style = {/pgf/number format/1000 sep={}},
x tick label style={inner xsep=0pt,rotate=45,anchor=north east},
ymajorgrids,
scaled ticks = false,
y tick label style={/pgf/number format/fixed},
]
\addplot[ForestGreen,smooth,very thick,loosely dashed] coordinates {
(2001,185083)
(2002,189425)
(2003,196464)
(2004,224410)
(2005,249636)
(2006,258197)
(2007,262814)
(2008,227904)
(2009,251860)
(2010,272639)
(2011,267142)
(2012,223221)
(2013,170092)
(2014,177785)
(2015,181459)
(2016,190356)
};
\addplot[black,smooth,very thick,loosely dotted] coordinates {
(2001,66924)
(2002,71066)
(2003,71093)
(2004,78044)
(2005,76534)
(2006,73082)
(2007,80196)
(2008,70112)
(2009,87548)
(2010,95561)
(2011,96225)
(2012,77879)
(2013,66007)
(2014,72052)
(2015,72045)
(2016,73632)
};
\addplot[blue,smooth,very thick,dashdotdotted] coordinates {
(2001,166552)
(2002,168104)
(2003,183283)
(2004,211022)
(2005,237782)
(2006,247631)
(2007,245739)
(2008,211639)
(2009,236796)
(2010,257912)
(2011,252560)
(2012,194439)
(2013,153812)
(2014,162630)
(2015,166451)
(2016,174061)
};
\end{axis}

\begin{axis}[width = 0.94\textwidth, 
height = 0.5\textwidth,
xmin = 2001,
xmax = 2016,
xticklabels={,,},
axis y line* = right,
scaled ticks = false,
y tick label style={/pgf/number format/fixed},
]
\addplot[red,smooth,very thick] coordinates {
(2001,145)
(2002,146)
(2003,155)
(2004,162)
(2005,165)
(2006,164)
(2007,169)
(2008,172)
(2009,174)
(2010,173)
(2011,172)
(2012,167)
(2013,167)
(2014,168)
(2015,169)
(2016,179)
};
\end{axis}
\end{tikzpicture}

\vspace{-0.4cm}
\begin{center}
\begin{tikzpicture}
        \begin{customlegend}[legend columns=2, legend entries={Number of preferences (left scale):$\quad$,Unweighted$\quad$,Weighted \hspace{4cm} $\quad$,Adjusted unweighted}, legend style = {font=\small}]
        \addlegendimage{empty legend}
        \addlegendimage{color=ForestGreen,very thick,loosely dashed}
        \addlegendimage{color=black,very thick,loosely dotted}
        \addlegendimage{color=blue,smooth,very thick,dashdotdotted}
        \end{customlegend}
\end{tikzpicture}

\vspace{0.1cm}

\begin{tikzpicture}        
        \begin{customlegend}[legend columns=4, legend entries={Number of (right scale):$\quad$,Faculties}, legend style = {font=\small}]
        \addlegendimage{empty legend}
		\addlegendimage{color=red,very thick}  
        \end{customlegend}
\end{tikzpicture}
\end{center}

\caption{Descriptive statistics of the dataset, 2001-2016}
\label{Fig2}
\end{figure}


Figure~\ref{Fig2} presents some descriptive statistics of the dataset in the period analysed.\footnote{~In the database, the number of faculties in 2016 was originally $191$ due to the renaming of some higher education institutions. After eliminating the ones without a preference from any applicant, only $179$ remained.}
It can be realised that the Hungarian higher education admission is a huge system with more than 150 thousand revealed preferences in each year. It is under constant reform, even the set of the faculties, the basic units of higher education, changes in almost every year.

\begin{table}[ht!]
  \centering
  \caption{Comparison of the Hungarian Dentistry and Medicine faculties, 2016}
  \label{Table3}
\begin{threeparttable}
\addtocounter{table}{-1}
\begin{subtable}{\textwidth}
  \centering
  \caption{List of Dentistry and Medicine faculties}
  \rowcolors{3}{gray!20}{}
  \label{Table3a}
    \begin{tabularx}{0.8\textwidth}{LLl} \toprule \hiderowcolors
    Faculty & City  & Type \\ \midrule \showrowcolors
    DE--AOK & Debrecen & Medicine \\
    DE--FOK & Debrecen & Dentistry \\
    PTE--AOK & P\'ecs  & Dentistry and Medicine \\
    SE--AOK & Budapest & Medicine \\
    SE--FOK & Budapest & Dentistry \\
    SZTE--AOK & Szeged & Medicine \\
    SZTE--FOK & Szeged & Dentistry \\ \bottomrule
    \end{tabularx}
\end{subtable}

\vspace{0.5cm}
\begin{subtable}{\textwidth}
  \centering
  \caption{The unweighted preference matrix $A^{UW}$ of Dentistry and Medicine faculties}
  \rowcolors{3}{gray!20}{}
  \label{Table3b}   
    \begin{tabularx}{\textwidth}{l CCC CCCC} \toprule \hiderowcolors
    Faculty & F1 & F2 & F3 & F4 & F5 & F6 & F7 \\ \midrule \showrowcolors
    DE--AOK (F1) & 0     & 138   & 506   & 127   & 53    & 308   & 43 \\
    DE--FOK (F2) & 146   & 0     & 144   & 21    & 37    & 52    & 76 \\
    PTE--AOK (F3) & 270   & 87    & 0     & 140   & 84    & 273   & 83 \\
    SE--AOK (F4) & 634   & 72    & 778   & 0     & 244   & 874   & 68 \\
    SE--FOK (F5) & 109   & 178   & 258   & 101   & 0     & 129   & 204 \\
    SZTE--AOK (F6) & 560   & 58    & 835   & 132   & 49    & 0     & 72 \\
    SZTE--FOK (F7) & 45    & 137   & 200   & 17    & 32    & 122   & 0 \\ \bottomrule
    \end{tabularx}
\end{subtable}

\vspace{0.5cm}
\begin{subtable}{\textwidth}
  \centering
  \caption{Scores of Dentistry and Medicine faculties, unweighted preference matrix $A^{UW}$}
  \rowcolors{3}{gray!20}{}
  \label{Table3c}   
    \begin{tabularx}{\textwidth}{l RRR C} \toprule \hiderowcolors
    Faculty & Row sum & Norm. row sum & Least squares & Preferences \\ \midrule \showrowcolors
    DE--AOK & ($-589$) \textbf{6} & ($-0.200$) \textbf{6} & ($-0.176$) \textbf{5} & ($2939$) \textbf{4} \\
    DE--FOK & ($-194$) \textbf{5} & ($-0.169$) \textbf{5} & ($-0.202$) \textbf{6} & ($1146$) \textbf{6} \\
    PTE--AOK & ($-1784$) \textbf{7} & ($-0.488$) \textbf{7} & ($-0.387$) \textbf{7} & ($3658$) \textbf{1} \\
    SE--AOK & ($2132$) \textbf{1} & ($0.665$) \textbf{1} & ($0.531$) \textbf{1} & ($3208$) \textbf{3} \\
    SE--FOK & ($480$) \textbf{2} & ($0.325$) \textbf{2} & ($0.301$) \textbf{2} & ($1478$) \textbf{5} \\
    SZTE--AOK & ($-52$) \textbf{4} & ($-0.015$) \textbf{4} & ($-0.022$) \textbf{3} & ($3464$) \textbf{2} \\
    SZTE--FOK & ($7$) \textbf{3} & ($0.006$) \textbf{3} & ($-0.045$) \textbf{4} & ($1099$) \textbf{7} \\ \bottomrule
    \end{tabularx}
\end{subtable}

\begin{tablenotes}
\item
\vspace{0.25cm}
\footnotesize{Numbers in parentheses indicate the score, bold numbers sign the rank of the faculty}
\end{tablenotes}
\end{threeparttable}
\end{table}

Therefore, in the first step, we restrict our attention to the seven Dentistry and Medicine faculties and to the last recorded year of 2016.\footnote{~Probably these faculties have the most international students in Hungary. For example, their ratio is close to 50\% at SE--AOK and SE--FOK, representing altogether more than $3000$ foreigners with Germany, Iran, Norway, Italy, and South Korea being the top five countries of origin. See at \url{http://semmelweis.hu/english/the-university/facts-and-figures/} (downloaded 31 May 2019).}
The calculations are summarised in Table~\ref{Table3}. Table~\ref{Table3a} shows the main characteristics of these faculties, Table~\ref{Table3b} presents the unweighted preference matrix derived with the methodology described in Section~\ref{Sec3}, and Table~\ref{Table3c} provides the scores and rankings.

Because the unweighted matrix does not take the length of preference lists into account, it is known that $138$ students have preferred DE--AOK to DE--FOK, while $146$ applicants have made the opposite choice. In addition, $2939$ applicants have revealed a preference concerning DE--AOK according to the last column of Table~\ref{Table3c}: there are $1175$ preferences for this faculty (the sum of the first row in Table~\ref{Table3b}) and $1764$ against it (the sum of the first column in Table~\ref{Table3b}), so the number of ``net'' preferences, that is, the row sum is equal to $-589$. Consequently, normalised row sum will be $-589 / 2939 \approx -0.2$.

The ranking obtained from the row sum and normalised row sum methods coincide, while the least squares method changes the position of two pairs of faculties. Since these scores sum up to $0$, one can say that all rural faculties are below the average in this particular set with the exception of the two based at Budapest, the capital of Hungary.

Rankings can be validated not only through their axiomatic properties but by measuring how they reflect the preferences. One way is to count the number of preferences which are contradictory with the ranking:
\begin{equation} \label{eq_pref}
\sum_{X_i, X_j \in N} a_{ij}: j \succ i.
\end{equation}
This value is $2195$ for the row sum and normalised row sum, which increases to $2253$ in the case of the least squares ranking.\footnote{~This can be seen from the preference matrix in Table~\ref{Table3b}, too: DE--FOK is preferred by more applicants to DE--AOK than vice versa, and the same holds in the relation of SZTE--FOK and SZTE--AOK.}
Nonetheless, the seven faculties form only a small sample of the entire dataset, so it is premature to state that the theoretically sound least squares method does not work in practice.

\begin{table}[ht!]
  \centering
  \caption{Kendall rank correlation coefficients, 2016}
  \label{Table4}
    \rowcolors{4}{}{gray!20}
    \begin{tabularx}{\textwidth}{l CCC CCC CCC} \toprule \hiderowcolors
          & \multicolumn{3}{c}{Row sum} & \multicolumn{3}{c}{Norm. row sum} & \multicolumn{3}{c}{Least squares} \\
          & $A^{UW}$ & $A^{W}$ & $\hat{A}^{UW}$ & $A^{UW}$ & $A^{W}$ & $\hat{A}^{UW}$ & $A^{UW}$ & $A^{W}$ & $\hat{A}^{UW}$ \\ \midrule \showrowcolors
    $\mathbf{s} \left( A^{UW} \right)$ & ---   & \emph{0.909} & \emph{0.953} & 0.783 & 0.759 & 0.771 & 0.719 & 0.709 & 0.705 \\
    $\mathbf{s} \left( A^{W} \right)$ &       & ---   & \emph{0.888} & 0.779 & 0.797 & 0.766 & 0.724 & 0.730 & 0.712 \\
    $\mathbf{s} \left( \hat{A}^{UW} \right)$ &       &       & ---   & 0.764 & 0.741 & 0.775 & 0.705 & 0.694 & 0.702 \\
    $\mathbf{p} \left( A^{UW} \right)$ &       &       &       & ---   & \emph{0.902} & \emph{0.944} & 0.831 & 0.826 & 0.820 \\
    $\mathbf{p} \left( A^{W} \right)$ &       &       &       &       & ---   & \emph{0.882} & 0.808 & 0.836 & 0.798 \\
    $\mathbf{p} \left( \hat{A}^{UW} \right)$ &       &       &       &       &       & ---   & 0.815 & 0.815 & 0.830 \\
    $\mathbf{q} \left( A^{UW} \right)$ &       &       &       &       &       &       & ---   & \emph{0.932} & \emph{0.960} \\
    $\mathbf{q} \left( A^{W} \right)$ &       &       &       &       &       &       &       & ---   & \emph{0.927} \\ \hline
    \end{tabularx}
\end{table}

Table~\ref{Table4} shows the (symmetric) Kendall rank correlation coefficients \citep{Kendall1938} between the nine rankings obtained from the unweighted $A^{UW}$, the weighted $A^{UW}$, and the adjusted unweighted $\hat{A}^{UW}$ preference matrices with the three methods presented in Section~\ref{Sec3}. This measure is based on the number of concordant and disconcordant pairs between the two rankings, its value is between $-1$ and $+1$ such that $-1$ indicates complete disagreement, while $+1$ indicates perfect agreement. In order to avoid the adjustment for ties, the number of preferences has been used as a tie-breaking rule to get strict rankings.

It can be seen that the effect of the ranking method is substantially larger than the effect of the preference matrix (compare the italic numbers with the other ones).
Rankings from the unweighted $A^{UW}$ and adjusted unweighted $\hat{A}^{UW}$ preference matrices are more similar than rankings from the weighted version $A^{W}$.
In addition, the least squares method is more robust to the choice of the preference matrix than the other two procedures.
Therefore, in the following analysis we will mainly focus on this procedure and the unweighted preference matrix $A^{UW}$. It is worth noting that the rankings obtained from $A^{UW}$ and from the matrix which considers only the preferences among the state-financed programmes are similar.

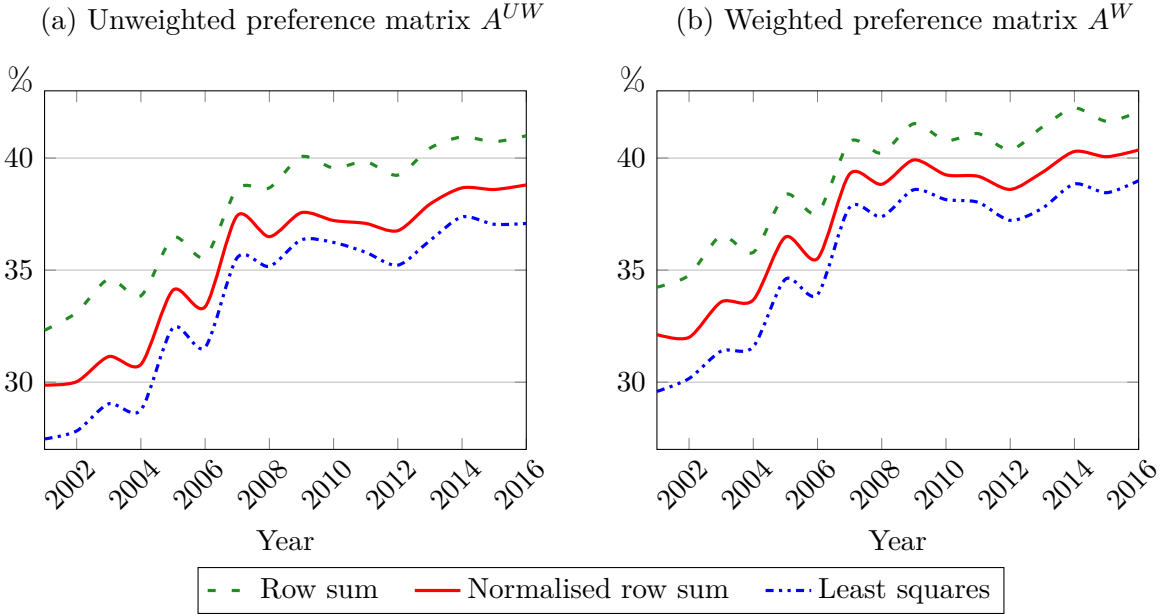
\begin{figure}[ht]
\centering

\begin{subfigure}{0.495\textwidth}
\caption{Unweighted preference matrix $A^{UW}$}
\label{Fig3a}
\begin{tikzpicture}
\begin{axis}[width=\textwidth, 
height=0.8\textwidth,
xmin = 2001,
xmax = 2016,
xlabel = Year,
xlabel style = {font=\small},
ymin = 27,
ymax = 43,
ylabel = \%,
y label style = {at={(axis description cs:-0.05,1.1)},rotate=90,anchor=south},
x tick label style = {/pgf/number format/1000 sep={}},
x tick label style={inner xsep=0pt,rotate=45,anchor=north east},
ymajorgrids,
scaled ticks = false,
y tick label style={/pgf/number format/fixed},
]
\addplot[ForestGreen,smooth,very thick,loosely dashed] coordinates {
(2001,32.3173038468606)
(2002,33.1089108910891)
(2003,34.5971032803338)
(2004,33.8564002888756)
(2005,36.4535887544216)
(2006,35.5885315769082)
(2007,38.6228627315176)
(2008,38.6722694979407)
(2009,40.0626742606582)
(2010,39.557835109037)
(2011,39.8315683836856)
(2012,39.2385033641584)
(2013,40.4406659998823)
(2014,40.942606560055)
(2015,40.7295671021934)
(2016,40.9961222795532)
};
\addplot[red,smooth,very thick] coordinates { 
(2001,29.8703824770508)
(2002,30.0208525801769)
(2003,31.1420921899177)
(2004,30.7909629695646)
(2005,34.1056578378119)
(2006,33.3621227202485)
(2007,37.4329373625454)
(2008,36.4925582701488)
(2009,37.5654632155293)
(2010,37.2122110189665)
(2011,37.0855200604922)
(2012,36.7595342732091)
(2013,37.9459351409825)
(2014,38.6697415417499)
(2015,38.5965975785164)
(2016,38.8020340835067)
};
\addplot[blue,smooth,very thick,dashdotdotted] coordinates {
(2001,27.4687572602562)
(2002,27.8231490035634)
(2003,29.0368718951055)
(2004,28.7603047992514)
(2005,32.4244099408739)
(2006,31.5607849820099)
(2007,35.5582274916861)
(2008,35.1700716090986)
(2009,36.3507504168983)
(2010,36.2350947590037)
(2011,35.7959437303007)
(2012,35.2211485478517)
(2013,36.3197563671425)
(2014,37.3726692353123)
(2015,37.0447318678048)
(2016,37.0810481413772)
};
\end{axis}
\end{tikzpicture}
\end{subfigure}
\begin{subfigure}{0.495\textwidth}
\caption{Weighted preference matrix $A^{W}$}
\label{Fig3b}
\begin{tikzpicture}
\begin{axis}[width = \textwidth, 
height = 0.8\textwidth,
xmin = 2001,
xmax = 2016,
xlabel = Year,
xlabel style = {font=\small},
ymin = 27,
ymax = 43,
ylabel = \%,
y label style = {at={(axis description cs:-0.05,1.1)},rotate=90,anchor=south},
x tick label style = {/pgf/number format/1000 sep={}},
x tick label style={inner xsep=0pt,rotate=45,anchor=north east},
ymajorgrids,
scaled ticks = false,
y tick label style={/pgf/number format/fixed},
]
\addplot[ForestGreen,smooth,very thick,loosely dashed] coordinates {
(2001,34.2411577026962)
(2002,34.7739860867094)
(2003,36.5290133045044)
(2004,35.7946006038107)
(2005,38.3695263978535)
(2006,37.5506443512246)
(2007,40.7220702136906)
(2008,40.2261959096737)
(2009,41.533645760584)
(2010,40.7859004046777)
(2011,41.0908461072137)
(2012,40.3638000096607)
(2013,41.3856611167502)
(2014,42.2316752715628)
(2015,41.6416128808381)
(2016,42.0594759524843)
};
\addplot[red,smooth,very thick] coordinates { 
(2001,32.1249192403038)
(2002,32.0000991700537)
(2003,33.5800390233315)
(2004,33.6680956530015)
(2005,36.4756715658275)
(2006,35.5129528344548)
(2007,39.2864743283716)
(2008,38.8319776792505)
(2009,39.9184719348513)
(2010,39.2521904482849)
(2011,39.1869947667294)
(2012,38.6004479476403)
(2013,39.3622898581461)
(2014,40.2928440570695)
(2015,40.0608878710062)
(2016,40.3604863827318)
};
\addplot[blue,smooth,very thick,dashdotdotted] coordinates {
(2001,29.5879761264377)
(2002,30.16303422841)
(2003,31.3857904435035)
(2004,31.5810922288037)
(2005,34.5941424104071)
(2006,33.9080172174506)
(2007,37.7996171285113)
(2008,37.3875608550128)
(2009,38.5877733466483)
(2010,38.1466238717626)
(2011,38.026069851167)
(2012,37.2198691621127)
(2013,37.7604900490349)
(2014,38.843566683321)
(2015,38.4548083373815)
(2016,38.9946400115889)
};
\end{axis}
\end{tikzpicture}
\end{subfigure}

\vspace{-0.4cm}
\begin{center}
\begin{tikzpicture}
        \begin{customlegend}[legend columns=3, legend entries={Row sum$\quad$,Normalised row sum$\quad$,Least squares}, legend style = {font=\small}]
        \addlegendimage{color=ForestGreen,very thick,loosely dashed}
        \addlegendimage{color=red,very thick}
        \addlegendimage{color=blue,smooth,very thick,dashdotdotted} 
        \end{customlegend}
\end{tikzpicture}
\end{center}

\caption{The ratio of preferences contradictory with the ranking, 2001-2016}
\label{Fig3}

\end{figure}


Figure~\ref{Fig3} illustrates the performance of the three methods by calculating the ratio of preferences which are contradictory with the appropriate ranking according to formula~\eqref{eq_pref}. The normalised row sum procedure turns out to be better than the simple row sum, but the least squares method continuously beats both of them. Thus, the message of Table~\ref{Table2} in favour of the least squares method, which was based on purely theoretical considerations, is reinforced by its superior performance on a large-scale dataset across more than a decade.

The ratio of preferences that are contradictory with the rankings has increased robustly between 2001 and 2016. While it is difficult to disentangle the effect of the constantly changing set of objects, the judgements of the applicants have probably become more diverse in the period considered.

To investigate the dynamics of the results, eight faculties have been chosen for in-depth analysis:
\begin{itemize}
\item
BME--GEK: Faculty of Mechanical Engineering, Budapest University of Technology and Economics (more than 2000 revealed preferences in each year);
\item
BME--GTK: Faculty of Economic and Social Sciences, Budapest University of Technology and Economics (more than 4500 revealed preferences in each year);
\item
PTE--AOK: Medical School, University of P\'ecs (more than 2000 revealed preferences in each year);
\item
SE--AOK: Faculty of Medicine, Semmelweis University (more than 1500 revealed preferences in each year);
\item
ELTE--AJK: Faculty of Law, E\"otv\"os L\'or\'and University (more than 4000 revealed preferences in each year);
\item
ELTE--TTK: Faculty of Science, E\"otv\"os L\'or\'and University (more than 5500 revealed preferences in each year);
\item
SZTE--BTK: Faculty of Humanities and Social Sciences, University of Szeged (more than 3500 revealed preferences in each year); and
\item
ZSKF: King Sigismund University (more than 2000 revealed preferences in each year).
\end{itemize}

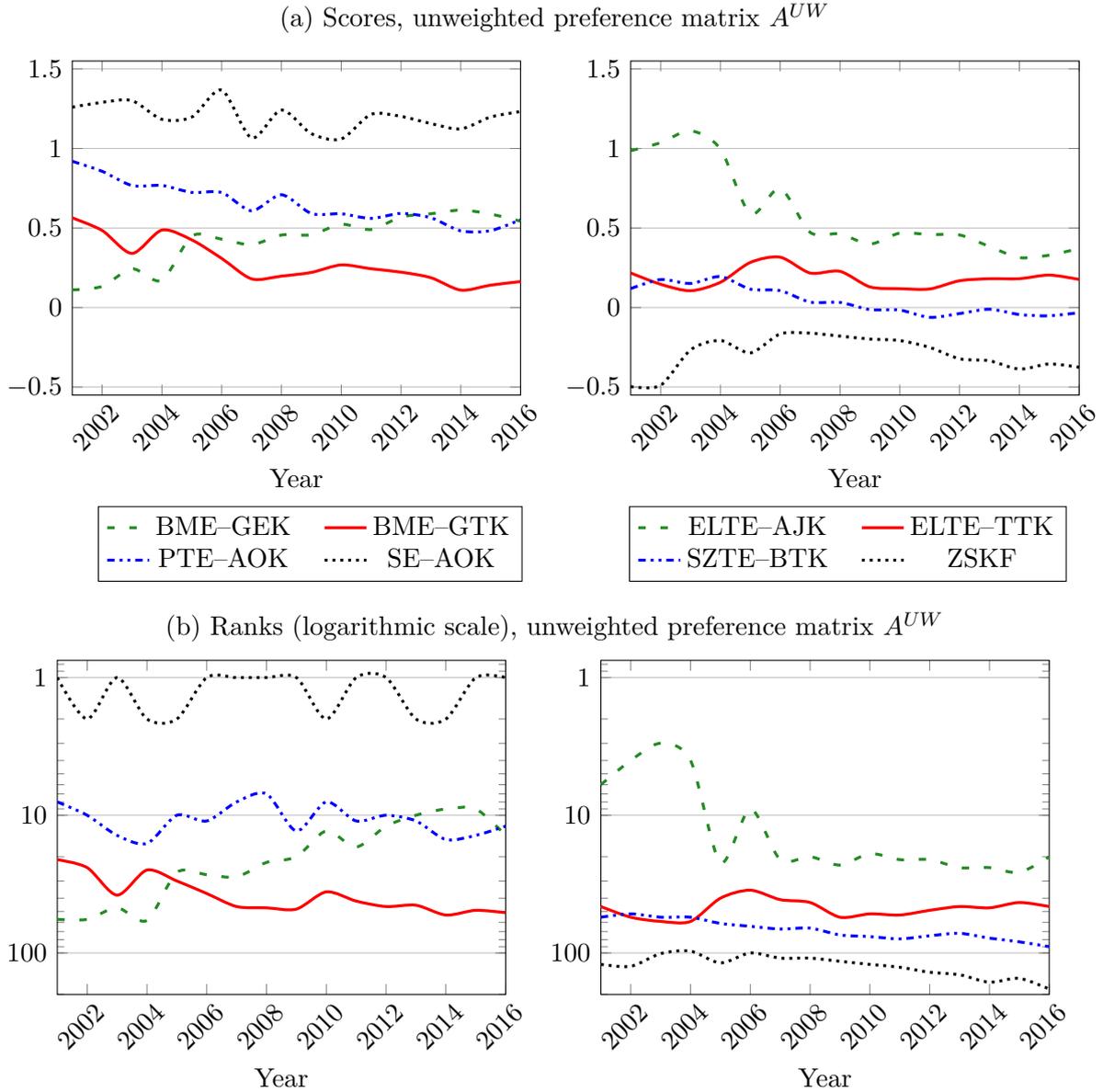
\begin{figure}[ht]
\centering

\begin{subfigure}{\textwidth}
\caption{Scores, unweighted preference matrix $A^{UW}$}
\label{Fig4a}
\begin{tikzpicture}
\begin{axis}[width=0.5\textwidth, 
height=0.4\textwidth,
xmin = 2001,
xmax = 2016,
xlabel = Year,
xlabel style = {font=\small},
ymin = -0.55,
ymax = 1.55,
x tick label style = {/pgf/number format/1000 sep={}},
x tick label style={inner xsep=0pt,rotate=45,anchor=north east},
ymajorgrids,
scaled ticks = false,
y tick label style={/pgf/number format/fixed},
]
\addplot[ForestGreen,smooth,very thick,loosely dashed] coordinates {
(2001,0.11104777989995)
(2002,0.133190700888121)
(2003,0.243853427842068)
(2004,0.171429561094607)
(2005,0.449244525147314)
(2006,0.430145717675869)
(2007,0.395613672359136)
(2008,0.455908819893862)
(2009,0.457887074596244)
(2010,0.523645556257489)
(2011,0.490652102496768)
(2012,0.569473418448652)
(2013,0.589639554195721)
(2014,0.612748225050528)
(2015,0.588117950931223)
(2016,0.542984415696179)
};
\addplot[red,smooth,very thick] coordinates { 
(2001,0.563901885694877)
(2002,0.484795753534196)
(2003,0.340641814171368)
(2004,0.487086623978622)
(2005,0.425433974028426)
(2006,0.309056936016802)
(2007,0.180598108767054)
(2008,0.197609399566495)
(2009,0.22095793105679)
(2010,0.268142163591396)
(2011,0.2438700945156)
(2012,0.222664997040681)
(2013,0.187746580700734)
(2014,0.109829577773497)
(2015,0.141061504591472)
(2016,0.163337486398341)
};
\addplot[blue,smooth,very thick,dashdotdotted] coordinates {
(2001,0.919472067819588)
(2002,0.856551569925337)
(2003,0.767274233035469)
(2004,0.767711908211038)
(2005,0.72387126413776)
(2006,0.722869901271731)
(2007,0.608830737937726)
(2008,0.709273584496993)
(2009,0.591495224653337)
(2010,0.589605568095999)
(2011,0.560544387330922)
(2012,0.59160122415808)
(2013,0.564307506421505)
(2014,0.48231983812324)
(2015,0.483869862783417)
(2016,0.55255950371328)
};
\addplot[black,smooth,very thick,dotted] coordinates {
(2001,1.25949648622004)
(2002,1.29010371833772)
(2003,1.30058491657237)
(2004,1.18305582063557)
(2005,1.19752376959949)
(2006,1.3666576962318)
(2007,1.06639890781563)
(2008,1.24153096369049)
(2009,1.09312919386651)
(2010,1.06007341093856)
(2011,1.21434830364314)
(2012,1.2014588561557)
(2013,1.15666147857728)
(2014,1.1232976268146)
(2015,1.19776446835452)
(2016,1.23183074429447)
};
\end{axis}
\end{tikzpicture}
\begin{tikzpicture}
\begin{axis}[width = 0.5\textwidth, 
height = 0.4\textwidth,
xmin = 2001,
xmax = 2016,
xlabel = Year,
xlabel style = {font=\small},
ymin = -0.55,
ymax = 1.55,
x tick label style = {/pgf/number format/1000 sep={}},
x tick label style={inner xsep=0pt,rotate=45,anchor=north east},
ymajorgrids,
scaled ticks = false,
y tick label style={/pgf/number format/fixed},
]
\addplot[ForestGreen,smooth,very thick,loosely dashed] coordinates {
(2001,0.985967177855641)
(2002,1.03628025720687)
(2003,1.1110821487671)
(2004,0.996358778540705)
(2005,0.589699126064107)
(2006,0.748256078434367)
(2007,0.473098544684534)
(2008,0.463323711178947)
(2009,0.399199506760028)
(2010,0.468228654520045)
(2011,0.458915082375774)
(2012,0.456511080391303)
(2013,0.383708870246416)
(2014,0.314069317018362)
(2015,0.329415511158786)
(2016,0.371247069287409)
};
\addplot[red,smooth,very thick] coordinates { 
(2001,0.217421428007848)
(2002,0.146373293930177)
(2003,0.105874896136782)
(2004,0.158895607702882)
(2005,0.283704910510263)
(2006,0.316619373379033)
(2007,0.21714956175336)
(2008,0.227735703752226)
(2009,0.130146246152391)
(2010,0.118949968233618)
(2011,0.116961572254993)
(2012,0.168536590837769)
(2013,0.181300562959015)
(2014,0.181559289676586)
(2015,0.204271109249992)
(2016,0.176756158532298)
};
\addplot[blue,smooth,very thick,dashdotdotted] coordinates {
(2001,0.118986099286371)
(2002,0.176048494698262)
(2003,0.151456758787721)
(2004,0.19480896265338)
(2005,0.11567511702382)
(2006,0.106434937226059)
(2007,0.0345231884401215)
(2008,0.0324869221629231)
(2009,-0.0128169514462326)
(2010,-0.0156445554319209)
(2011,-0.0616633291996288)
(2012,-0.0377055863149933)
(2013,-0.0106383188215132)
(2014,-0.0435946393637369)
(2015,-0.0518287629099026)
(2016,-0.0319437894164717)
};
\addplot[black,smooth,very thick,dotted] coordinates {
(2001,-0.496604543997275)
(2002,-0.488792488416264)
(2003,-0.266180021661058)
(2004,-0.207014985145919)
(2005,-0.284879994490654)
(2006,-0.166046748498936)
(2007,-0.161028519720868)
(2008,-0.179302529961534)
(2009,-0.198137239334078)
(2010,-0.207288333292203)
(2011,-0.250627671001402)
(2012,-0.321723627554464)
(2013,-0.335015789637275)
(2014,-0.386610677401854)
(2015,-0.354936380573092)
(2016,-0.375442579576782)
};
\end{axis}
\end{tikzpicture}
\end{subfigure}

\vspace{0.1cm}
\hspace{0.75cm}
\begin{tikzpicture}
        \begin{customlegend}[legend columns=2, legend entries={BME--GEK$\quad$,BME--GTK,PTE--AOK$\quad$,SE--AOK}, legend style = {font=\small}]
        \addlegendimage{color=ForestGreen,very thick,loosely dashed}
        \addlegendimage{color=red,very thick}
        \addlegendimage{color=blue,smooth,very thick,dashdotdotted}
        \addlegendimage{color=black,smooth,very thick,dotted} 
        \end{customlegend}
\end{tikzpicture}
\hspace{1.25cm}
\begin{tikzpicture}
        \begin{customlegend}[legend columns=2, legend entries={ELTE--AJK$\quad$,ELTE--TTK,SZTE--BTK$\quad$,ZSKF}, legend style = {font=\small}]
        \addlegendimage{color=ForestGreen,very thick,loosely dashed}
        \addlegendimage{color=red,very thick}
        \addlegendimage{color=blue,smooth,very thick,dashdotdotted}
        \addlegendimage{color=black,smooth,very thick,dotted} 
        \end{customlegend}
\end{tikzpicture}
\vspace{0.4cm}

\begin{subfigure}{\textwidth}
\caption{Ranks (logarithmic scale), unweighted preference matrix $A^{UW}$}
\label{Fig4b}
\begin{tikzpicture}
\begin{axis}[width = 0.5\textwidth, 
height = 0.4\textwidth,
xmin = 2001,
xmax = 2016,
xlabel = Year,
xlabel style = {font=\small},
ymin = 0.75,
ymax = 200,
y dir = reverse,
ymode = log,
x tick label style = {/pgf/number format/1000 sep={}},
x tick label style={inner xsep=0pt,rotate=45,anchor=north east},
ymajorgrids,
scaled ticks = false,
log ticks with fixed point,
y tick label style={/pgf/number format/fixed},
]
\addplot[ForestGreen,smooth,very thick,loosely dashed] coordinates {
(2001,57)
(2002,57)
(2003,47)
(2004,58)
(2005,26)
(2006,27)
(2007,28)
(2008,22)
(2009,20)
(2010,13)
(2011,17)
(2012,12)
(2013,10)
(2014,9)
(2015,9)
(2016,14)
};
\addplot[red,smooth,very thick] coordinates { 
(2001,21)
(2002,24)
(2003,38)
(2004,25)
(2005,30)
(2006,37)
(2007,46)
(2008,47)
(2009,48)
(2010,36)
(2011,42)
(2012,46)
(2013,45)
(2014,53)
(2015,49)
(2016,51)
};
\addplot[blue,smooth,very thick,dashdotdotted] coordinates {
(2001,8)
(2002,10)
(2003,14)
(2004,16)
(2005,10)
(2006,11)
(2007,8)
(2008,7)
(2009,13)
(2010,8)
(2011,11)
(2012,10)
(2013,11)
(2014,15)
(2015,14)
(2016,12)
};
\addplot[black,smooth,very thick,dotted] coordinates {
(2001,1)
(2002,2)
(2003,1)
(2004,2)
(2005,2)
(2006,1)
(2007,1)
(2008,1)
(2009,1)
(2010,2)
(2011,1)
(2012,1)
(2013,2)
(2014,2)
(2015,1)
(2016,1)
};
\end{axis}
\end{tikzpicture}
\begin{tikzpicture}
\begin{axis}[width = 0.5\textwidth, 
height = 0.4\textwidth,
xmin = 2001,
xmax = 2016,
xlabel = Year,
xlabel style = {font=\small},
ymin = 0.75,
ymax = 200,
y dir = reverse,
ymode = log,
x tick label style = {/pgf/number format/1000 sep={}},
x tick label style={inner xsep=0pt,rotate=45,anchor=north east},
ymajorgrids,
scaled ticks = false,
log ticks with fixed point,
y tick label style={/pgf/number format/fixed},
]
\addplot[ForestGreen,smooth,very thick,loosely dashed] coordinates {
(2001,6)
(2002,4)
(2003,3)
(2004,4)
(2005,22)
(2006,9)
(2007,21)
(2008,20)
(2009,23)
(2010,19)
(2011,21)
(2012,21)
(2013,24)
(2014,24)
(2015,26)
(2016,20)
};
\addplot[red,smooth,very thick] coordinates { 
(2001,46)
(2002,55)
(2003,59)
(2004,59)
(2005,40)
(2006,35)
(2007,41)
(2008,43)
(2009,55)
(2010,52)
(2011,53)
(2012,49)
(2013,46)
(2014,47)
(2015,43)
(2016,46)
};
\addplot[blue,smooth,very thick,dashdotdotted] coordinates {
(2001,55)
(2002,52)
(2003,55)
(2004,55)
(2005,61)
(2006,64)
(2007,67)
(2008,66)
(2009,74)
(2010,76)
(2011,79)
(2012,75)
(2013,72)
(2014,78)
(2015,83)
(2016,90)
};
\addplot[black,smooth,very thick,dotted] coordinates {
(2001,121)
(2002,125)
(2003,101)
(2004,97)
(2005,118)
(2006,100)
(2007,109)
(2008,109)
(2009,115)
(2010,121)
(2011,127)
(2012,138)
(2013,144)
(2014,163)
(2015,153)
(2016,183)
};
\end{axis}
\end{tikzpicture}
\end{subfigure}

\caption{Least squares scores and ranks of selected Hungarian faculties, 2001-2016}
\label{Fig4}

\end{figure}


Figure~\ref{Fig4} shows the least squares scores (Figure~\ref{Fig4a}) and ranks (Figure~\ref{Fig4b}) of these faculties. SE--AOK has achieved at least the second position in each year between 2001 and 2016. PTE--AOK has been close to the bottom of the top 10 consistently.
On the other hand, BME--GTK, as well as ELTE--AJK (and, to some extent, SZTE-BTK) have displayed a declining performance, especially law studies have become less popular among the applicants in this one and a half decade. The gradual improvement of BME--GEK demonstrates that an appropriate long-term strategy can yield significant gains. ELTE--TTK has remained a strong middle-rank faculty with some fluctuations, while ZSKF has not managed to increase its low prestige.

Tables~\ref{Table_A1}, \ref{Table_A2}, and \ref{Table_A3} in the Appendix present the scores and ranks of the faculties in 2016, as obtained from the unweighted $A^{UW}$, the weighted $A^{UW}$, and the adjusted unweighted $\hat{A}^{UW}$ preference matrices, respectively.
It is not surprising that some small faculties have a good position according to the normalised row sum and least squares methods. Turning to the more popular institutions and concentrating on the suggested least squares method, all of the seven Dentistry and Medicine faculties are among the top faculties in Hungary.
While the worse faculties of this set (such as DE--AOK, DE--FOK, or PTE--AOK, see Table~\ref{Table3}) do not seem excellent by the two local measures of row sum and normalised row sum, their performance significantly improves after taking the whole structure of the network into account because, although they are not favoured by the applicants over the leading Dentistry and Medicine faculties, they are still preferred to faculties in other subject areas. Similar causes are behind the relatively better performance of the four Pharmacy faculties (DE--GYTK in Debrecen, PTE--GYTK in P\'ecs, SE--GYTK in Budapest, SZTE--GYTK in Szeged) when the least squares method is applied.

On the other hand, the most prestigious business (BCE--GTK) and economics faculties (BCE--KTK) are outside the top 20 according to the least squares ranking, despite that the former is among the best according to row sum, and performs better even with the normalised row sum method.
The unique Veterinary Medicine faculty (SZIE--AOTK) and the leading Faculty of Architecture (BME--ESZK) are also in the top 10, however, their good positions are still revealed by the normalised row sum method. There is no substantial difference between the normalised row sum and least squares rankings in the case of BCE--TK (Faculty of Social Sciences and International Relations at Corvinus University of Budapest), too. 
Nonetheless, certain faculties gain (e.g. PPKE--ITK, Faculty of Information Technology and Bionics at P\'azm\'any P\'eter Catholic University), or lose (e.g. ELTE--GYFK, B\'arczi Guszt\'av Faculty of Special Needs Educations at E\"otv\"os Lor\'and University; NKE--RTK, Faculty of Law Enforcement at National University of Public Service; TE, University of Physical Education) from the use of the least squares method.

\begin{table}[ht!]
  \centering
  \caption{Ranks of selected Hungarian faculties, 2016}
  \label{Table5}
    \rowcolors{4}{}{gray!20}
    \begin{tabularx}{\textwidth}{l CCC CCC CCC} \toprule \hiderowcolors
    Method $\rightarrow$     & \multicolumn{3}{c}{Row sum} & \multicolumn{3}{c}{Norm. row sum} & \multicolumn{3}{c}{Least squares} \\
    Faculty $\downarrow$     & $A^{UW}$ & $A^{W}$ & $\hat{A}^{UW}$ & $A^{UW}$ & $A^{W}$ & $\hat{A}^{UW}$ & $A^{UW}$ & $A^{W}$ & $\hat{A}^{UW}$ \\ \midrule \showrowcolors
    BCE--GTK & 2     & 1     & 2     & 12    & 12    & 12    & 21    & 21    & 22 \\
    BCE--KTK & 30    & 34    & 30    & 28    & 30    & 22    & 22    & 25    & 19 \\
    BCE--TK & 12    & 13    & 13    & 25    & 27    & 27    & 25    & 32    & 25 \\
    BME--ESZK & 22    & 23    & 22    & 6     & 6     & 7     & 10    & 7     & 9 \\
    BME--TTK & 48    & 115   & 53    & 52    & 112   & 53    & 38    & 51    & 39 \\
    DE--AOK & 19    & 27    & 19    & 38    & 35    & 39    & 9     & 13    & 8 \\
    DE--FOK & 40    & 49    & 39    & 42    & 47    & 42    & 8     & 8     & 7 \\
    DE--GYTK & 149   & 110   & 153   & 162   & 143   & 165   & 50    & 50    & 51 \\
    ELTE--GYFK & 13    & 8     & 10    & 15    & 13    & 15    & 31    & 27    & 30 \\
    NKE--RTK & 20    & 12    & 21    & 24    & 18    & 24    & 39    & 28    & 41 \\
    PPKE--ITK & 51    & 40    & 50    & 53    & 46    & 52    & 33    & 29    & 32 \\
    PTE--AOK & 169   & 144   & 173   & 99    & 109   & 105   & 12    & 14    & 12 \\
    PTE--GYTK & 144   & 128   & 148   & 175   & 173   & 174   & 62    & 94    & 64 \\
    SE--AOK & 1     & 3     & 1     & 2     & 1     & 2     & 1     & 1     & 1 \\
    SE--FOK & 10    & 24    & 8     & 8     & 7     & 6     & 2     & 2     & 2 \\
    SE--GYTK & 122   & 48    & 135   & 87    & 53    & 91    & 16    & 15    & 18 \\
    SZIE--AOTK & 16    & 19    & 15    & 3     & 4     & 3     & 4     & 4     & 5 \\
    SZTE--AOK & 7     & 26    & 7     & 22    & 29    & 21    & 3     & 6     & 4 \\
    SZTE--FOK & 32    & 39    & 31    & 32    & 28    & 32    & 5     & 3     & 6 \\
    SZTE--GYTK & 126   & 94    & 130   & 133   & 113   & 134   & 36    & 37    & 35 \\
    TE    & 11    & 5     & 9     & 17    & 16    & 16    & 37    & 36    & 38 \\ \bottomrule
    \end{tabularx}
\end{table}

Table~\ref{Table5} summarises the ranks of the faculties mentioned above, classified by the three methods and the three variants of preference matrices.
It also reveals that the unweighted $A^{UW}$ and adjusted unweighted $\hat{A}^{UW}$ preference matrices lead to almost the same ranking, that is, separation of preferences with respect to the financing of the tuition has only marginal effects.
On the other hand, there are some remarkable differences between the rankings obtained from the unweighted $A^{UW}$ and weighted $A^{W}$ preference matrices, especially for the row sum and normalised row sum methods.
The case of GYTKs is probably explained by the fact that Pharmacy faculties are ``substitutes'' of Medicine faculties for several applicants (but not vice versa), so the preference lists of students with an unfavourable view on pharmacy are inherently longer.
The issue of BME--TTK (Faculty of Natural Sciences, Budapest University of Technology and Economics) remains to be explored.

Thus the least squares ranking of the faculties has an obvious, intuitive explanation. Concisely, the dataset reveals that the number of applicants who want to be a doctor but choose another field if this dream is not achievable is significantly greater than the number of applicants employing an opposite strategy.
Naturally, one can eliminate this effect by composing separate lists on different subject areas, for example, by considering only an appropriate submatrix of the whole preference matrix as in Table~\ref{Table3}.

However, sometimes there is a demand for a universal ranking. This is what we have provided here.

\section{Conclusions} \label{Sec6}

In this paper, a university ranking has been constructed from the lists of applications, which can be implemented in any system using centralised admissions. It is clear that the proposed ranking has a different nature with respect to other university rankings, and our approach has its own limitations since the preferences are not observed directly but derived from the applications, the preferences of the students are determined not only by the quality of the institutions, and past reputation can increase the inertia of the ranking.
On the other hand, the dataset reflects the collective information of thousands of applicants at a moment of high-stakes decisions, and this collective wisdom of the crowd can perhaps be competitive with some composite indices devised by a dozen academics. To summarise, preference-based rankings do not solve the problems of other methodologies but can be used as an alternative, for example, to check the robustness of traditional university rankings.

We have presented a case study by ranking all faculties in the Hungarian higher education between 2001 and 2016. Three different methods and three variants of preference matrices have been considered for this purpose.
Our results show that the suggested ranking possesses favourable theoretical properties and performs well in practice: the least squares method is hardly sensitive to the aggregation of individual preferences, and it reflects the revealed preferences better than the other procedures discussed.

\section*{Acknowledgements}
\addcontentsline{toc}{section}{Acknowledgements}
\noindent
We would like to thank \emph{P\'eter Bir\'o}, \emph{S\'andor Boz\'oki}, \emph{Tam\'as Halm}, \emph{L\'aszl\'o \'A. K\'oczy}, \emph{Lajos R\'onyai}, and \emph{Andr\'as Telcs} for useful advice. \\
Four anonymous reviewers gave valuable comments and suggestions on an earlier draft. \\
The data were provided by the Databank of the Centre for Economic and Regional Studies. \\
The research was financed by the MTA Premium Postdoctoral Research Program grant PPD2019-9/2019. \\
\emph{Csaba T\'oth} acknowledges the support of the Hungarian Academy of Sciences under its Cooperation of Excellences Grant (KEP-6/2018).

\bibliographystyle{apalike}
\bibliography{All_references}

\clearpage

\section*{Appendix}
\addcontentsline{toc}{section}{Appendix}

\renewcommand\thetable{A.\arabic{table}}
\setcounter{table}{0}

\LTXtable{\textwidth}{Table_A1}

\clearpage
\LTXtable{\textwidth}{Table_A2}

\clearpage
\LTXtable{\textwidth}{Table_A3}

\end{document}